\theoremstyle{plain}
\newtheorem{theorem}{Theorem}[section]
\newtheorem{thm}[theorem]{Theorem}
\newtheorem{lemma}[theorem]{Lemma}
\newtheorem{prop}[theorem]{Proposition}
\newtheorem{defn}[theorem]{Definition}
\newtheorem{corollary}[theorem]{Corollary}
\newtheorem{cor}[theorem]{Corollary}
\newcommand{\R}{\mathbb{R}}
\newcommand{\C}{\mathbb{C}}
\newcommand{\mcA}{\mathcal{A}}
\newcommand{\mcC}{\mathcal{C}}
\newcommand{\mcB}{\mathcal{B}}
\newcommand{\ket}[1]{|{#1}\rangle}
\newcommand{\bra}[1]{\langle {#1} |}
\newcommand{\norm}[1]{\left|\left| {#1}\right|\right|}
\newcommand{\iso}{\cong}
\newcommand{\eps}{\epsilon}
\newcommand{\bias}{\varepsilon}
\DeclareMathOperator{\tr}{tr}
\DeclareMathOperator{\cl}{cl}
\DeclareMathOperator{\vspan}{span}
\DeclareMathOperator{\rank}{rank}
\DeclareMathOperator{\Id}{\mathbbm{1}}
\DeclareMathOperator{\CHSH}{CHSH}
\DeclareMathOperator{\Ball}{B}
\newcommand{\arr}{\rightarrow}
\begin{document}
\title{Lower bounds on the entanglement needed to play XOR non-local games}
\author{William Slofstra%
\date{July 15, 2010}
\thanks{Department of Mathematics, University of California, Berkeley. Email: {\tt slofstra@math.berkeley.edu}}}
\maketitle

\begin{abstract} 
    We give an explicit family of XOR games with $O(n)$-bit questions requiring
    $2^n$ ebits to play near-optimally. More generally we introduce a new
    technique for proving lower bounds on the amount of entanglement required
    by an XOR game: we show that near-optimal strategies for an XOR game $G$
    correspond to approximate representations of a certain $C^*$-algebra
    associated to $G$.  Our results extend an earlier theorem of Tsirelson
    characterising the set of quantum strategies which implement extremal
    quantum correlations. 

    %While Tsirelson’s characterisation indirectly gives lower bounds on many
    %XOR games, our methods apply directly to all XOR games. 
\end{abstract}

\section{Introduction}

The purpose of this paper is to study the amount of entanglement required
to play an XOR non-local game optimally or near-optimally. In a non-local game,
Alice and Bob are asked questions chosen at random according to a known
distribution.  They win if their answers match the answers required by the
game's rules, and lose otherwise. They are physically separated and unable to
communicate during the game, so they cannot hope to win every game.  Instead,
they must meet in advance to determine the best strategy, i.e.\ one that
maximizes the probability that they win.  Bell's celebrated theorem
\cite{bell:1964} states that for some games Alice and Bob can increase their
probability of winning if they employ a quantum strategy, in which they make
use of an entangled quantum state. 

From the viewpoint of quantum computation, a quantum strategy is a means of
performing a distributed computation with a higher success probability than
what can be achieved classically with no communication \cite{brassard:2003}
\cite{cleve:2004}.  From this viewpoint it is natural to want to know how much
entanglement is required to implement an optimal quantum strategy. Entanglement
of a pure state can be measured using either the entropy of entanglement (measured
in ebits), or the dimension of the underlying Hilbert space at each site; the
two measures are related by the fact that any state in $\C^d \times \C^d$ has
at most $\log_2 d$ ebits. In physics, the importance of lower bounds stems from
the fact that non-local games provide empirical evidence that a quantum system
is entangled \cite{werner:2001}. Lower bounds on the entanglement required by a
game imply that the game can be used to verify the degree of entanglement in a
quantum system, at least in principle \cite{brunneretal:2008}. 

In this paper we focus on a subset of non-local games generalizing the
well-known Clauser-Horne-Shomony-Holt (CHSH) game \cite{clauser:1969}: Alice
and Bob each receive exactly one question (drawn from finite sets $S$ and $T$
respectively), each question requires a one-bit answer, and the correct
response depends only on the XOR of Alice's and Bob's answers. Non-local games
meeting these restrictions are known as XOR non-local games. They have
attracted interest in part because of the influential results of Tsirelson
\cite{tsirel87} \cite{tsirel93} which guarantee the existence of an optimal
strategy supported on a finite-dimensional Hilbert space, and make it possible
to find an optimal strategy using semidefinite programming methods. Tsirelson
also proved a lower bound on entanglement for quantum strategies by showing
that strategies which implement extreme points of the set of quantum
correlations require dimension exponential in the rank of the correlation
matrix (see Theorem \ref{T:tsirel1}). 

Tsirelson's lower bound can be dualized (see Proposition \ref{P:tsireltight})
to show that there are many XOR games requiring dimension exponential in the
size of the question sets. In particular non-local games exist which are able
to verify that quantum states have arbitrarily high dimension. More recent work
has readdressed this issue with different methods. Junge, Palazuelos,
Perez-Garcia, Villanueva, and Wolf study the ratio $\eps(G,m) / \eps(G,n)$ for
general two and three player non-local games, where $\eps(G,d)$ is the maximum
success bias for $G$ achievable with a quantum system of dimension $d$
\cite{junge:2008} \cite{junge:2009}. In particular, they show that two-player
games exist for which this ratio is arbitrarily high. However, question sets of
size $O(2^{d \log^2 d})$ are required to verify dimension $d$. Briet, Buhrman,
and Toner study a similar ratio for XOR games, where the parameter $d$ in
$\eps(G,d)$ refers to the rank of the quantum correlation matrix rather than
the dimension of the quantum system. The ratio $\eps(G,m) / \eps(G,n)$ cannot
be arbitrarily high for XOR games, since Tsirelson showed that $\eps(G,d) /
\eps(G,1)$ is always bounded by Grothendieck's constant. Nonetheless, Briet,
Buhrman, and Toner show non-constructively that there are XOR games for which
this ratio is greater than $1 + 1/2m + 1/2n - O(1/m^2)$ \cite{briet:2009}.
Vertesi and Pal give an explicit family of XOR games with question sets
of size $O(n^2)$ requiring dimension $n$ to play optimally \cite{vertesi:2008}.
Their lower bound is proved by giving a lower bound on the rank of the optimal
correlation matrix. 

In contrast to the body of work on lower bounds, little is known about upper
bounds on entanglement for general non-local games. Tsirelson proved that XOR
games with question sets of size $m$ and $n$ require at most $\lfloor r / 2
\rfloor$ ebits, where $r$ is the largest integer such that $\binom{r}{2} < m +
n$. One purpose of this paper is to point out that this lower bound is tight,
so XOR games with small question sets can require high entanglement to play
optimally. This can be proved by applying Tsirelson's lower bound method to
specific families of games, such as the family of examples due to Vertesi and
Pal. For completeness, we give another family of examples for which Tsirelson's
bound is tight in Section \ref{S:examples}.  Using a new lower bound technique,
we are also able to study the amount of entanglement required by near-optimal
strategies. Cleve, Hoyer, Toner, and Watrous showed that an $m \times n$ XOR
non-local game has an $\eps$-optimal quantum strategy on a Hilbert space of
dimension $(m+n+1)^{O(\eps^{-2})}$ \cite{cleve:2004}. Their proof can be
modified to be independent of the size of the question sets, so that fixing
$\eps$ bounds the entanglement cost for all XOR games (this was apparently
first observed by O. Regev, and the author is grateful to R. Cleve and R. Jain
for a proof \cite{cleve:private}). We show, for a specific family of games,
that $\eps$-optimal strategies require a Hilbert space of dimension
$\min(O(\eps^{-1/12}), 2^{\lfloor r/2\rfloor})$, where $r$ is the integer from
Tsirelson's upper bound. 

The main idea of this paper is to associate to each XOR non-local game a
finitely presented $C^*$-algebra $\mcA$ with the property that (a) optimal
strategies correspond to representations of $\mcA$, and (b) near-optimal
strategies correspond to approximate representations of $\mcA$. The
presentation for $\mcA$ depends only on the game rules and a small amount of
additional information: the game's marginal biases. Inspiration for this idea
comes from the work of Summers and Werner \cite{summers:1987} and Tsirelson
\cite{tsirel87} \cite{tsirel93}. Summers and Werner show that the Clifford
algebra of rank two satisfies property (a) above for the CHSH game.  The main
statement of Tsirelson's lower bound is that if an XOR non-local game has a
unique optimal correlation matrix $C$ then the algebra generated by the
observables of any optimal non-degenerate quantum strategy is isomorphic to the
Clifford algebra with $\rank C$ generators. Our approach is more general, in
that it applies to all XOR games, and also to near-optimal strategies. 

The paper is organized as follows. In the next section we give an overview of
our results, including the necessary background material and the relation to
Tsirelson's results. Marginal biases are introduced in Section \ref{S:mbias},
and the main result for optimal strategies follows in Section \ref{S:exact}. A
link between near-optimal strategies and approximate representations is
established in Section \ref{S:approxrep}. In Section \ref{S:cliffstable} we
show that the Clifford algebra is stable.  Finally, explicit examples are given
in Section \ref{S:examples}.

\section{Definitions and statement of results}\label{S:background}

\subsection{XOR non-local games and quantum strategies}\label{S:qstrat}

The rules for an XOR non-local game are comprised of two finite sets 
$S$ and $T$ of questions, a probability distribution $\pi$ on $S \times T$,
and a function $f : S \times T \arr \{0,1\}$ recording the correct answers.
Alice and Bob have access to the rules and may communicate before the
game begins. When the game begins, a pair of questions $(i,j)$ is chosen
with probability $\pi(i,j)$. Alice receives question $i$, and must output
a bit $a$; Bob receives question $j$ and outputs a bit $b$. Alice and Bob
win if $a \oplus b$ (the XOR of the two bits) matches $f(i,j)$. All that
matters about $S$ and $T$ is their size, so we assume $S = \{1,\ldots,
m\}$ and $T = \{1,\ldots,n\}$. The cost matrix for this game is defined to be
the $m \times n$ matrix $G$ with $G_{st} = (-1)^{f(i,j)} \pi(i,j)$. The value
of $\pi(i,j)$ can always be recovered from $G_{ij}$, and $f(i,j)$ can also be
recovered if $\pi(i,j) \neq 0$.  If $\pi(i,j) = 0$ then $f(i,j)$ is irrelevant
to the game, so an $m \times n$ XOR non-local game is completely described
by its cost matrix $G$. A matrix $G$ is the cost matrix of an XOR non-local
game if $\sum |G_{ij}| = 1$.

Let $a_i$ be the binary random variable corresponding to Alice's output on
input $i$, and let $b_j$ be the binary random variable corresponding to
Bob's output on input $j$. Then $(-1)^{f(i,j)} (-1)^{a_i} (-1)^{b_j}$ is $1$ if
Alice and Bob win on question $i,j$, and $-1$ otherwise. The sum
\begin{equation*}
    \sum_{i,j} G_{ij} (-1)^{a_i} (-1)^{b_j}
\end{equation*}
is the \emph{expected success bias} for this behaviour, analagous to the
expected success probability, but normalized between $-1$ and $1$.\footnote{If
$p$ is the success probability, then the success bias is $2p-1$.} In the
quantum setting, Alice and Bob are allowed to select a shared bipartite quantum
state $\ket{\psi} \in H_1 \otimes H_2$. In addition to the state $\ket{\psi}$,
a quantum strategy consists of two families of observables $\{A_i : i \in S\}$
and $\{B_j : j \in T\}$, on the Hilbert spaces $H_1$ and $H_2$ respectively,
with spectra contained in $\{-1,1\}$. The success bias for this quantum
strategy is
\begin{equation*}
    \sum G_{ij} \bra{\psi} A_i \otimes B_j \ket{\psi},
\end{equation*}
and the \emph{quantum success bias $\bias_q(G)$} for the game is the maximum
success bias across all quantum strategies. The amount of entanglement used by
a quantum strategy will be measured using the entropy of entanglement of the
state $\ket{\psi}$, which is by definition the von Neumann entropy of the
partial trace of $\ket{\psi}$ with respect to $H_1$. The dimension of a quantum
strategy is $\min (\dim H_1, \dim H_2)$. 

\subsection{$C^*$ algebras generated by self-adjoint indeterminates} 

Let $R_1,\ldots,R_k$ be non-commutative polynomials in indeterminates
$X_1,\ldots,X_n$. Let $\mcA$ be a $C^\ast$-algebra generated by self-adjoint
operators $X_1,\ldots,X_n$. Then $\mcA$ is said to be the universal
$C^\ast$-algebra generated by self-adjoint indeterminates $X_i$  satisfying
relations $R_1,\ldots,R_k$ if for all Hilbert spaces $H$ the map $\rho \mapsto
(\rho(X_1),\ldots,\rho(X_n))$ is a bijection between the representations $\rho$
of $\mcA$ on $H$ and tuples $(B_1,\ldots,B_n)$ of bounded self-adjoint
operators on $H$ such that $R_i(B_1,\ldots,B_n) = 0$ for all $i=1,\ldots,k$.
The universal $C^\ast$-algebra for a set of relations is determined uniquely up
to isomorphism, if it exists. The set of relations determining $\mcA$ is not
unique. Since we are interested in $C^\ast$-algebras given by certain
relations, we will regard the generators and defining relations as part of the
data of the algebra. A universal $C^\ast$-algebra for a given set of relations
exists if and only if there are constants $b_1,\ldots,b_n$ such that
$\norm{B_i} \leq b_i$ whenever $(B_1,\ldots,B_n)$ is a tuple of operators
satisfying the given relations. For convenience we introduce a slight
generalization of the usual notion of a cyclic representation:

\begin{defn}
    Let $\mcA$ be a $C^*$-algebra generated by self-adjoint indeterminates
    $X_i$ satisfying relations $R_j$. A \emph{density-matrix representation} of
    $\mcA$ is an action of $\mcA$ on a Hilbert space $H$, determined by a
    collection of self-adjoint operators $B_i$, along with a density matrix
    $\rho$ on $H$ such that $\mcA \rho H$ is dense in $H$. 
\end{defn}

The following universal $C^*$-algebra will be used in examples, and in 
describing previous results of Tsirelson.
\begin{defn}
    The Clifford algebra of rank $r$ is the universal $C^\ast$-algebra
    $\mcC_r$ generated by indeterminates $X_1,\ldots,X_r$ satisfying the
    relations $X_i^2 = \Id$ for all $1 \leq i \leq n$ and $X_i X_j = 
    - X_j X_i$ for all $i \neq j$. 
\end{defn}
A universal $C^\ast$-algebra generated by $Y_1,\ldots,Y_n$ is \emph{Clifford}
if for all $1 \leq i,j \leq n$ the anti-commutator $Y_i Y_j + Y_j Y_i$ is a
scalar multiple of the identity.  If $\mcA$ is Clifford with $Y_i Y_j + Y_j Y_i
= C_{ij} \Id$ then $\mcA \iso \mcC_r$, where $r$ is the rank of the matrix
$\{C_{ij}\}$. The isomorphism can be chosen so that every $Y_i$ corresponds
to a linear combination of the distinguished generators $X_j$ of $C_r$. We
will call a universal $C^\ast$-algebra \emph{strongly Clifford} if the
Clifford relations $Y_i Y_j + Y_j Y_i - V_{ij} \Id$ are linear combinations
of the defining relations.  The representation theory of $C_r$ is well-known:
$C_r$ has either one or two irreducible representations of dimension
$2^{\lfloor r / 2 \rfloor}$.

\subsection{Main result: algebraic characterization of optimal
solutions}\label{S:cstar}

If $\{A_i\}$,$\{B_j\}$,$\ket{\psi}$ is a quantum strategy on a bipartite
Hilbert space $H_1 \otimes H_2$, define the \emph{marginal (strategy)} on $H_2$
to be the collection of operators $\{B_j\}$, as well as the density operator
$\rho$ which is the partial trace of $\ket{\psi}$ with respect to $H_1$. Note
that the entanglement of the original strategy is the von Neumann entropy of
$\rho$. A strategy $\{A_i\}$,$\{B_j\}$,$\ket{\psi}$ is said to be
non-degenerate if there is no projection $P$ commuting with all $A_i$ such that
that $(P \otimes I) \ket{\psi} = \ket{\psi}$, and no projection $Q$ commuting
with all $B_j$ such that $(I \otimes Q) \ket{\psi} = \ket{\psi}$. Every
degenerate strategy projects down to a unique non-degenerate strategy. 

Our main result is a precise description of the marginal strategies
corresponding to optimal non-degenerate quantum strategies of a given XOR
non-local game $G$. A key part of this description is the fact that for every
$m \times n$ XOR non-local game $G$ and $1 \leq i \leq m$ there is a constant
$c_i$ such that $\sum_j G_{ij} \bra{\psi} A_i \otimes B_j \ket{\psi} = c_i$
whenever $\{A_i\}$,$\{B_j\}$,$\ket{\psi}$ is an optimal quantum strategy. We
refer to $c_i$ as the \emph{$i$th marginal row bias for $G$}. The existence of
the marginal biases is proved in Section \ref{S:mbias}. 

If $G$ is an $m \times n$ XOR non-local game with marginal row biases $c_i$,
define the \emph{solution algebra $\mcA$ for $G$} to be the universal $C^*$
algebra generated by self-adjoint indeterminates $X_1,\ldots,X_n$, satisfying
the relations
\begin{equation*}%\label{universaleq1} and \label{universaleq2}
    X_j^2 = \Id \text{ for all } 1 \leq j \leq n, \text{ and }
    \left( \sum_j G_{ij} X_j \right)^2 = c_i^2 \cdot \Id 
        \text{ for all } 1 \leq i \leq m.
\end{equation*}
\begin{thm}\label{T:tensordescript} Let $G$ be a XOR non-local game with no zero
    rows or columns and solution algebra $\mcA$. A collection of bounded linear
    operators $\{B_j : j \in T\}$ and density operator $\rho$ on a Hilbert
    space $H$ is the marginal of a non-degenerate optimal strategy for $G$ if
    and only if the map $X_j \mapsto B_j$ induces a density-matrix
    representation of $\mcA$ on $H$ with density matrix $\rho$, and $\rho$
    commutes with the image of $\mcA$.
\end{thm}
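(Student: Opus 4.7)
Throughout, write $M_i := \sum_j G_{ij}B_j$ and $\omega(\cdot) := \tr(\rho\,\cdot)$. I plan to prove both directions via the Cauchy--Schwarz saturation forced on every optimal row contribution, and to reduce the algebraic relations in $\mcA$ to the resulting vector identity $(A_i \otimes \Id)\ket{\psi} = c_i^{-1}(\Id \otimes M_i)\ket{\psi}$ by matrix-element chasing.

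\emph{Forward direction.} Let $\{A_i\},\{B_j\},\ket{\psi}$ be an optimal non-degenerate strategy with marginal $(\{B_j\},\rho)$. From the marginal bias condition $\bra{\psi}(A_i \otimes M_i)\ket{\psi} = c_i$ and Cauchy--Schwarz,
\[
  c_i^2 \;\leq\; \bra{\psi}(A_i^2 \otimes \Id)\ket{\psi}\,\bra{\psi}(\Id \otimes M_i^2)\ket{\psi} \;=\; \tr(\rho M_i^2);
\]
since $c_i$ is also the maximum of the $i$th row contribution over observables (a strict improvement would break the optimal sum $\sum c_i = \bias_q(G)$), this bound is saturated, forcing
\begin{equation*}
  (A_i \otimes \Id)\ket{\psi} = c_i^{-1}(\Id \otimes M_i)\ket{\psi}. \tag{$\star_i$}
\end{equation*}
Applying $A_i \otimes \Id$ to $(\star_i)$ and partial-tracing over Alice's side yields the operator identity $M_i^2 \rho = c_i^2 \rho$. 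Pairing $(\star_i)$ against $\bra{\psi}(\Id \otimes N)$ gives $\bra{\psi}(A_i \otimes N)\ket{\psi} = c_i^{-1}\tr(\rho M_i N)$ for arbitrary $N$, and iterating for a word $A_{i_1}\cdots A_{i_k}$ produces $\bra{\psi}(A_{i_1}\cdots A_{i_k}\otimes N)\ket{\psi} = c_{i_1}^{-1}\cdots c_{i_k}^{-1}\tr(\rho M_{i_k}\cdots M_{i_1}N)$. Reality of this expression for self-adjoint $N$, together with its invariance under reversing the word of $A$'s, translates into the tracial identity $\omega(ab) = \omega(ba)$ on the $*$-algebra generated by the $M_i$. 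The main obstacle is then to promote this to $[B_j,\rho] = 0$ for each individual $B_j$; my plan is to combine the row identities with the dual column saturation $(\Id \otimes B_j)\ket{\psi} = d_j^{-1}(\tilde{N}_j \otimes \Id)\ket{\psi}$ (where $d_j$ is the column marginal bias and $\tilde{N}_j := \sum_i G_{ij}A_i$) and invoke a GNS-type argument: the no-zero-row and no-zero-column hypotheses force the $M_i$ and $\tilde{N}_j$ to generate enough of the relevant von Neumann closures that centrality of $\omega$ relative to each $B_j$ follows. Once $[B_j,\rho] = 0$, the subspace $K := \cl(\mcA\rho H)$ is $\mcA$-invariant, and the orthogonal projection onto $K$ commutes with every $B_j$ while fixing $\ket{\psi}$ on Bob's side, so non-degeneracy forces $K = H$; the operator $M_i^2 - c_i^2 \Id$ commutes with $\rho$, annihilates $\mathrm{range}(\rho)$, and hence vanishes on $K = H$, completing the forward direction.

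\emph{Reverse direction.} Given a density-matrix representation $\{B_j\}$ of $\mcA$ with $[\mcA,\rho] = 0$, purify by $\ket{\psi} := (\Id \otimes \sqrt{\rho})\sum_k \ket{k}\ket{k} \in H \otimes H$ in any orthonormal basis $\{\ket{k}\}$; the commutation $[B_j,\sqrt{\rho}] = 0$ yields the standard transpose identity $(\Id \otimes B_j)\ket{\psi} = (B_j^T \otimes \Id)\ket{\psi}$. Define $A_i := (M_i/c_i)^T$; since $M_i^2 = c_i^2 \Id$ in $\mcA$, each $A_i$ is a self-adjoint $\{-1,+1\}$-observable. A direct computation gives $\bra{\psi}(A_i \otimes M_i)\ket{\psi} = c_i^{-1}\tr(\rho M_i^2) = c_i$, so the success bias attains $\sum_i c_i = \bias_q(G)$ and the strategy is optimal. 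Bob's marginal is $(\{B_j\},\rho)$ by construction; non-degeneracy on Bob's side is equivalent to $\cl(\mcA\rho H) = H$ via the projection argument used above, and non-degeneracy on Alice's side follows from the symmetric transpose construction.
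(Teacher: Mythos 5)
Your reverse direction is essentially the paper's argument (purify via $\sqrt{\rho}$, set $A_i$ to the transpose of $M_i/c_i$, check the bias equals $\sum_i c_i$), and the row saturation identity $(\star_i)$ in your forward direction is exactly Corollary~\ref{C:mbias} recast through Cauchy--Schwarz. But there is a genuine gap at the pivotal step of the forward direction: you never actually prove $[B_j,\rho]=0$. You write down the column saturation $(\Id\otimes B_j)\ket{\psi} = d_j^{-1}(\tilde N_j\otimes\Id)\ket{\psi}$ and then say your \emph{plan} is to combine it with the row identities, deduce a tracial identity $\omega(ab)=\omega(ba)$ on the algebra generated by the $M_i$, and ``invoke a GNS-type argument'' using the no-zero-row/column hypotheses to force centrality. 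That is a sketch, not an argument, and it is also a detour: the tracial property of $\omega$ on $\langle M_i\rangle$ does not, by itself, say anything about commutation with an individual $B_j$, since the $B_j$ need not lie in the $*$-algebra generated by the $M_i$.

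The fix is much more direct and is what the paper does. Write $\ket{\psi}=\sum_k\ket{k}\,\lambda\ket{k}$ so $\rho=\lambda\lambda^*$; the column saturation says $(\sum_i G_{ij}A_i\otimes\Id - d_j\,\Id\otimes B_j)\ket{\psi}=0$, which in coordinates is $\lambda\overline{(\sum_i G_{ij}A_i)} = d_j B_j\lambda$. Multiplying on the right by $\lambda^*$ and taking adjoints of both sides gives $d_j B_j\rho = d_j\rho B_j$; since $G$ has no zero columns, $d_j\neq 0$ and hence $B_j\rho=\rho B_j$. (This is Lemma~\ref{L:tensorcommute} with $\eps=0$.) With commutation in hand, your closing paragraph --- $K=\cl(\mcA\rho H)=\cl(\rho H)$, non-degeneracy forces $K=H$, and then $M_i^2\rho=c_i^2\rho$ promotes to $M_i^2=c_i^2\Id$ --- is fine. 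One more caution: your Cauchy--Schwarz saturation tacitly uses both that the marginal row biases $c_i$ are well-defined constants depending only on $G$ and that $\sum_i c_i=\bias_q(G)$; these are not free, they come from the SDP strong-duality argument of Theorem~\ref{T:mbias}, which you should cite rather than fold into ``a strict improvement would break the optimal sum.''
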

Theorem \ref{T:tensordescript} determines the entanglement required by $G$ in
the following sense:
\begin{cor}\label{C:minentangl} Let $G$ be an XOR non-local game with no zero
    rows or columns, and let $\mcA$ be the corresponding solution algebra. Let
    $N$ be the minimum dimension among non-zero representations of $\mcA$. Then
    the minimum entanglement used by an optimal quantum strategy for $G$ is
    $\log_2 N$. The strategies which attain the minimum entanglement are the
    irreducible representations of $\mcA$ of dimension $N$, with cyclic state
    $I/N$, where $I$ denotes the identity operator.
\end{cor}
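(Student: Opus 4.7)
My plan is to apply Theorem \ref{T:tensordescript} to translate the minimization over quantum strategies into a minimization over density-matrix representations of the solution algebra $\mcA$. If $(\{B_j\}, \rho)$ on $H$ is the marginal of a non-degenerate optimal strategy, then by the theorem, $X_j \mapsto B_j$ gives a density-matrix representation of $\mcA$ with $\rho$ commuting with the image of $\mcA$, and conversely every such pair arises in this way. The entanglement of the strategy is by definition the von Neumann entropy $S(\rho)$ of the marginal, so it suffices to minimize $S(\rho)$ over density-matrix reps satisfying this commutation condition. Degenerate strategies are handled by noting that the projection to a non-degenerate strategy only shrinks the ambient Hilbert spaces and leaves the marginal $\rho$ (hence the entanglement) unchanged.

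Next I would use the commutation of $\rho$ with the image of $\mcA$ to decompose $H$ spectrally. Writing $\rho = \sum_j p_j P_j$ with $p_j > 0$ and $P_j$ the eigenprojection of finite rank $d_j$ (finite because $\rho$ is trace class), each $P_j H$ is an $\mcA$-invariant subspace carrying a non-zero representation of $\mcA$. The density condition that $\mcA \rho H$ be dense in $H$, combined with the fact that $\mcA$ preserves $\operatorname{supp}(\rho) = \bigoplus_j P_jH$, forces $\rho$ to have full support, so $H = \bigoplus_j P_jH$. By the definition of $N$, each $d_j \geq N$. Setting $q_j = p_j d_j$, so that $\{q_j\}$ is a probability distribution, a direct computation gives
\begin{equation*}
    S(\rho) \;=\; H(\{q_j\}) + \sum_j q_j \log_2 d_j \;\geq\; \log_2 N,
\end{equation*}
which is the desired lower bound on entanglement.

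The equality conditions then fall out: $S(\rho) = \log_2 N$ forces $H(\{q_j\}) = 0$ and $d_{j_0} = N$ for the unique $j_0$ with $q_{j_0} = 1$, so $H = P_{j_0} H$ has dimension $N$ and $\rho = I/N$. The $\mcA$-action on $H$ must moreover be irreducible, since any proper non-zero sub-representation would have dimension strictly less than $N$, contradicting the minimality of $N$. In the other direction, given any irreducible representation of $\mcA$ of dimension $N$ with $\rho = I/N$, the pair $(\{B_j\}, I/N)$ satisfies the hypotheses of Theorem \ref{T:tensordescript} trivially (commutation is automatic by Schur's lemma, and density holds because $\Id \in \mcA$), so it arises from a non-degenerate optimal strategy achieving entanglement $\log_2 N$.

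The only real subtleties I anticipate are verifying cleanly that the density condition forces $\rho$ to have full support (an easy consequence of the fact that $\mcA$ preserves each spectral projection) and that the reduction from arbitrary optimal strategies to non-degenerate ones preserves entanglement. The remainder is a routine convexity/entropy optimization, and the characterization of extremizers follows from tracking the equality cases in the two inequalities above.
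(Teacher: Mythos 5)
Your proof is correct and follows essentially the same line as the paper: use Theorem~\ref{T:tensordescript} to pass to density-matrix representations with $\rho$ commuting with $\mcA$, use that commutation to decompose $H$ into finite-dimensional $\mcA$-invariant pieces, and then run the additivity identity for the von~Neumann entropy of a direct sum to get $S(\rho)\geq\log_2 N$ with equality exactly for irreducibles of dimension $N$ with $\rho=I/N$. The only organizational difference is that the paper refines the eigenspaces of $\rho$ all the way to irreducible $\mcA$-modules before applying the entropy formula, whereas you stop at the eigenspace decomposition (noting each eigenspace carries a non-zero representation and hence has dimension at least $N$) and invoke irreducibility only when extracting the equality case; both routes are valid and rely on the same facts, and your version is, if anything, slightly more explicit about the full-support step and the reduction from degenerate to non-degenerate strategies.
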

The dimension of a representation is the dimension of the underlying Hilbert
space. Although the results above are stated for pure states, Corollary
\ref{C:minentangl} extends trivially to mixed states when the entanglement of
formation is used as an entanglement measure. A mixed state of minimum
entanglement must be a mixture of maximally entangled states.  The proofs of
Theorem \ref{T:tensordescript} and Corollary \ref{C:minentangl} are given in
Section \ref{S:exact}.

\subsection{Comparison to Tsirelson's results}

An $m \times n$ matrix $\{c_{ij}\}$ is called a \emph{quantum correlation} if
there is a quantum strategy $\{A_i\}$,$\{B_j\}$,$\ket{\psi}$ such that $c_{ij}
= \bra{\psi} A_i \otimes B_j \ket{\psi}$ for all $i,j$. The set $C_{mn}$ of all
$m \times n$ quantum correlations is closed and convex, so the problem of
finding the quantum success bias for a game $G$ can be formulated as the
following convex programming problem:
\begin{equation*}
    (\Phi) \quad\quad \max\ \sum_{i,j} G_{ij} c_{ij} \ :\ \text{ where }
        \{c_{ij}\} \text{ is a quantum correlation.}
\end{equation*}
The optimisation problem $(\Phi)$ can be formulated as a semidefinite
programming problem using the following theorem. This makes it possible
to find a near optimal quantum correlation in practice. 
\begin{thm}[Tsirelson, 1987]\label{T:correleq} An $m \times n$ matrix
    $\{c_{ij}\}$ is a quantum correlation matrix if and only if there are two
    families of vectors $\{u_i \in \Ball\left(\R^N\right) : i \in S\}$ and
    $\{v_j \in \Ball\left(\R^N\right) : j \in T\}$ such that $c_{ij} = u_i
    \cdot v_j$. 
\end{thm}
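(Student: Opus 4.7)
The plan is to prove both implications directly, using vectors built from the strategy in one direction and the Clifford algebra representation in the other.

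For the forward implication, I would start with a quantum strategy $\{A_i\}, \{B_j\}, \ket{\psi}$ on $H_1 \otimes H_2$ with $c_{ij} = \bra{\psi} A_i \otimes B_j \ket{\psi}$. The natural candidates are $u_i := (A_i \otimes \Id)\ket\psi$ and $v_j := (\Id \otimes B_j)\ket\psi$, viewed as elements of the complex Hilbert space $H_1 \otimes H_2$. Since $A_i^2 = B_j^2 = \Id$ (the observables have spectrum $\{-1,1\}$), we have $\norm{u_i}^2 = \bra\psi A_i^2 \otimes \Id \ket\psi = 1$ and similarly $\norm{v_j} = 1$, and their Hermitian inner product equals $c_{ij}$. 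Because each $A_i \otimes B_j$ is self-adjoint, $c_{ij}$ is real, so regarding $H_1\otimes H_2$ as a real Hilbert space of double the dimension, I can take $u_i, v_j$ to be the resulting real vectors; the real inner product then reproduces $c_{ij}$. The dimension $N$ can be taken to be $2\dim(H_1\otimes H_2)$ or truncated to the real span of the $u_i$ and $v_j$.

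For the reverse implication, I would use the Clifford algebra as the natural vehicle for linearizing inner products. Given $u_i, v_j \in \Ball(\R^N)$, first pad them to unit vectors in $\R^{N+2}$ by setting $\hat u_i = (u_i,\sqrt{1-\norm{u_i}^2},0)$ and $\hat v_j = (v_j,0,\sqrt{1-\norm{v_j}^2})$, so $\hat u_i \cdot \hat v_j = u_i \cdot v_j$. Let $\gamma_1,\ldots,\gamma_{N+2}$ be the generators of $\mcC_{N+2}$ in its irreducible representation on $H = \C^{2^{\lfloor (N+2)/2\rfloor}}$, and set
\begin{equation*}
    A_i = \sum_k (\hat u_i)_k\, \gamma_k, \qquad B_j = \sum_k (\hat v_j)_k\, \gamma_k.
\end{equation*}
The Clifford relations give $A_i^2 = \norm{\hat u_i}^2 \Id = \Id$ and $B_j^2 = \Id$, so these are valid observables. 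Using the maximally entangled state $\ket\Phi = \frac{1}{\sqrt{\dim H}}\sum_k \ket{k}\otimes\ket{k}$ on $H\otimes H$, together with the strategy $A_i$ on the first factor and $B_j^T$ on the second, a standard computation yields
\begin{equation*}
    \bra\Phi A_i \otimes B_j^T \ket\Phi = \tfrac{1}{\dim H}\tr(A_i B_j).
\end{equation*}
The key trace identity $\tr(\gamma_k \gamma_l) = \delta_{kl}\dim H$ (which follows because distinct $\gamma_k$ anticommute and each $\gamma_k^2 = \Id$) gives $\frac{1}{\dim H}\tr(A_iB_j) = \hat u_i \cdot \hat v_j = u_i\cdot v_j = c_{ij}$, as needed.

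The main work is the reverse direction: one has to know (or quickly verify) the trace vanishing properties of products of distinct Clifford generators, and be careful with the transpose on Bob's side so that $A_i \otimes B_j^T$ paired with the maximally entangled state produces $\tr(A_iB_j)/\dim H$ rather than something involving complex conjugates. The padding trick from $\R^N$ to $\R^{N+2}$ is needed precisely to promote vectors in the closed ball to unit vectors, which is what allows the $A_i$ and $B_j$ to be $\pm 1$ observables rather than merely contractions; this is where the statement's ``unit ball'' (as opposed to unit sphere) is accommodated.
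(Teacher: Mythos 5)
Your proof is correct and matches the approach the paper itself sketches: the paper does not prove Theorem \ref{T:correleq} in full (it is a cited result of Tsirelson), but it does note that the reverse direction is a constructive argument via the Clifford algebra, which is exactly what you do. Both directions check out: the forward direction via $u_i=(A_i\otimes\Id)\ket\psi$, $v_j=(\Id\otimes B_j)\ket\psi$ (unit vectors, real inner product equals $c_{ij}$ since $A_i\otimes B_j$ is Hermitian, and one can restrict to the span to get finite $N$); and the reverse direction via padding to unit vectors in $\R^{N+2}$, Clifford generators $\gamma_k$, the trace identity $\tr(\gamma_k\gamma_l)=\delta_{kl}\dim H$, and the maximally entangled state, with the transpose on Bob's side correctly handled. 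The only cosmetic remark is that many statements of Tsirelson's theorem use unit vectors rather than the closed ball, in which case no padding is needed; your two-dimension padding is the clean way to accommodate the ball formulation used here.
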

Here $\Ball\left(\R^N\right)$ denotes the unit ball in $\R^N$. Define a
\emph{vector strategy} for $\{c_{ij}\}$ to be two collections of vectors
$\{u_i\}$ and $\{v_j\}$ in $\Ball\left(\R^N\right)$ such that $c_{ij} = u_i
\cdot v_j$. The proof of Theorem \ref{T:correleq} is constructive, associating
to a vector strategy a quantum strategy $\{A_i\}$, $\{B_j\}$, $\ket{\psi}$ with
the same correlation. The observables $\{A_i\}$ and $\{B_j\}$ are constructed
using the Clifford algebra $C_N$, and consequently Theorem \ref{T:correleq}
implies that every $m \times n$ game has an optimal quantum strategy using
$\lfloor n/2\rfloor$ ebits. Tsirelson \cite{tsirel87} showed that this upper
bound could be improved to $\lfloor r/2 \rfloor$ ebits, where $r$ is the
largest integer such that $\binom{r+1}{2} < m+n$. 
\begin{prop}\label{P:tsireltight}
    For every $n \geq 2$ there is an $m \times n$ XOR game for which
    Tsirelson's bound is tight. 
\end{prop}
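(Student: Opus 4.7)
The plan is to apply Corollary \ref{C:minentangl}: for each $n \geq 2$ I will construct an $m \times n$ XOR game whose solution algebra $\mcA$ is isomorphic to $\mcC_n$, the Clifford algebra of rank $n$, whose minimum representation dimension is $2^{\lfloor n/2 \rfloor}$. Take $m = \binom{n}{2} + 1$, so that $m + n = \binom{n+1}{2} + 1$, making $r = n$ the largest integer with $\binom{r+1}{2} < m + n$; Tsirelson's upper bound for this game is then $\lfloor n/2 \rfloor$ ebits, and Corollary \ref{C:minentangl} will yield the matching lower bound.

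For the construction, index $\binom{n}{2}$ of the rows of $G$ by pairs $\{j < k\} \subseteq \{1,\ldots,n\}$, and put all the weight of the row labelled $\{j,k\}$ on columns $j$ and $k$, so that $G_{\{j,k\}, l} = \gamma_{jk}(\delta_{jl} + \delta_{kl})$ up to sign. Expanding $(\sum_l G_{\{j,k\},l} X_l)^2 = c_{\{j,k\}}^2 \Id$ using $X_j^2 = X_k^2 = \Id$ reduces this defining relation of $\mcA$ to a single anti-commutation relation $X_j X_k + X_k X_j = \alpha_{jk}\Id$ with $\alpha_{jk}$ an explicit function of $\gamma_{jk}$ and $c_{\{j,k\}}$. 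With all $\binom{n}{2}$ such pairs present and a generic choice of weights and biases, the matrix $\{\alpha_{jk}\}$ (with $\alpha_{jj} = 2$) has rank $n$, so $\mcA$ is strongly Clifford with rank-$n$ anti-commutator matrix, and hence $\mcA \iso \mcC_n$ by the structure theory stated in Section \ref{S:cstar}. The one remaining row is reserved to enforce the normalization $\sum_{ij}|G_{ij}|=1$ without interfering with the algebra.

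The main obstacle is the self-consistency check: one must verify that the $c_{\{j,k\}}$ used to build $\mcA$ really are the marginal row biases of some optimal strategy for $G$. To handle this, I would construct an optimal strategy explicitly by running the Clifford-based construction in the proof of Theorem \ref{T:correleq} on the $2^{\lfloor n/2 \rfloor}$-dimensional irreducible representation of $\mcC_n$, compute its row biases, and solve for the free parameters $\gamma_{jk}$ so that the assumed $c_{\{j,k\}}$ match the output of this optimal strategy. Once this fixed point is verified, Theorem \ref{T:tensordescript} guarantees that every optimal non-degenerate strategy is a density-matrix representation of $\mcA \iso \mcC_n$, and Corollary \ref{C:minentangl} immediately gives the minimum entanglement $\log_2 2^{\lfloor n/2 \rfloor} = \lfloor n/2 \rfloor$ ebits; further details and a concrete family are deferred to Section \ref{S:examples}.
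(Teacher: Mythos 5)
Your high-level plan — exhibit an explicit game whose solution algebra is a full-rank Clifford algebra and then invoke Corollary~\ref{C:minentangl} to get the matching lower bound — is the same constructive route the paper takes (the paper also offers a second, non-constructive option via Tsirelson's Theorem 2.22, which you do not use). But two points in your sketch need repair.

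First, the ``one remaining row reserved to enforce the normalization $\sum_{ij}|G_{ij}|=1$ without interfering with the algebra'' does not do what you want. Normalization is free: one simply rescales all entries of $G$ by a constant, which changes neither the optimal strategies nor the solution algebra. By contrast, any genuine extra row $G_{m,\cdot}$ contributes its own relation $\bigl(\sum_j G_{mj}X_j\bigr)^2 = c_m^2\Id$ to $\mcA$, and if that row is zero the hypotheses of Theorem~\ref{T:tensordescript} and Corollary~\ref{C:minentangl} (no zero rows) fail. So you cannot have an inert filler row; the extra relation must be shown compatible with the Clifford structure. This is exactly the extra work in the Vert\'esi--P\'al family ($m=\binom{n}{2}+1$). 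The cleaner alternative, which the paper adopts via equation~(\ref{ourgame}), is to drop the extra row entirely: with $m=\binom{n}{2}$ one has $m+n=\binom{n+1}{2}$, so $r=n-1$ and Tsirelson's bound is $\lfloor(n-1)/2\rfloor$ ebits, which is attained because the solution algebra turns out to be $\mcC_{n-1}$. That still proves the proposition.

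Second, and more substantively, the self-consistency step — verifying that the $c_{\{j,k\}}$ you plug into the solution algebra really are the marginal biases of the game you wrote down — is where all the content lies, and your proposal only gestures at it (``solve for the free parameters so that the assumed $c_{\{j,k\}}$ match''). That is circular as stated, because the $c_i$ are determined by the optimal value of the convex program $(\Gamma)$ in Corollary~\ref{C:nonlinear} and cannot be freely prescribed. The paper resolves this by observing that for the game~(\ref{ourgame}) the objective of $(\Gamma)$ is invariant under permutation of the columns, so by averaging one can take the optimal $V$ of the form $aI+bE$ with $a+b=1$ and $a+nb=0$; this pins down $V$, hence the $c_i$, hence the anticommutators $X_iX_j+X_jX_i=2V_{ij}\Id$, giving $\rank V=n-1$ and $\mcA\iso\mcC_{n-1}$. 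Your proposal would need an analogous concrete computation to close the loop.
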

\begin{proof}
    Theorem 2.22 of \cite{tsirel93} can be used to find a maximal rank
    extreme point of the set of quantum correlations. Any supporting
    hyperplane for this extreme point will satisfy the proposition.

    For a constructive proof, apply either Theorem \ref{T:tensordescript} or
    Theorem \ref{T:tsirel1} below to explicit examples, such as the family of
    games in Section \ref{S:examples} ($m = \binom{n}{2}$) or the similar
    family of games in \cite{vertesi:2008} ($m = \binom{n}{2} + 1$).
\end{proof}

There is a strong relation between Theorem \ref{T:tensordescript} and the
following theorem of Tsirelson.
\begin{thm}[Tsirelson \cite{tsirel87}]\label{T:tsirel1} Suppose that $\{c_{ij}\}$
    is an extreme point of $C_{mn}$ of rank $r$, and $\{A_i\}$, $\{B_j\}$,
    $\ket{\psi}$ is a non-degenerate strategy\footnote{Tsirelson actually
    proves this result for a slightly different definition of quantum
    strategy, but the result still holds for the definition in use here.}
    representing $\{c_{ij}\}$. Then the strategy $\{A_i\}$, $\{B_j\}$,
    $\ket{\psi}$ is Clifford, and $\ket{\psi}$ uses at least $\lfloor r/2
    \rfloor$ ebits.  
\end{thm}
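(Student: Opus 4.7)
The plan is to reduce Theorem \ref{T:tsirel1} to Theorem \ref{T:tensordescript}. Because $\{c_{ij}\}$ is an extreme point of the convex set $C_{mn}$, a supporting-hyperplane argument (combined with a limiting procedure if $\{c_{ij}\}$ is not exposed) produces an $m \times n$ matrix $G$, which after normalisation is the cost matrix of an XOR non-local game whose unique optimal quantum correlation is $\{c_{ij}\}$. The marginal row biases of $G$ are then $c_i = \sum_j G_{ij} c_{ij}$, and Theorem \ref{T:tensordescript} identifies the marginal of the non-degenerate strategy $\{A_i\},\{B_j\},\ket{\psi}$ with a density-matrix representation of the solution algebra $\mcA$ of $G$; applying the same to $G^T$ controls Alice's marginal.

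The crux is to show that $\mcA$ is Clifford. Using $X_j^2 = \Id$, each relation $\left(\sum_j G_{ij} X_j\right)^2 = c_i^2 \Id$ expands as
\begin{equation*}
    \sum_{j < k} G_{ij} G_{ik} (X_j X_k + X_k X_j) = \left(c_i^2 - \sum_j G_{ij}^2\right) \Id,
\end{equation*}
so the defining relations pin down $m$ linear combinations of the $\binom{n}{2}$ anticommutators as scalar multiples of $\Id$. If some anticommutator remained unconstrained in $\mcA$, one could exhibit a one-parameter family of representations by varying its scalar value in a suitable quotient, and Theorem \ref{T:tensordescript} would convert this into a one-parameter family of optimal strategies whose correlations $\bra{\psi} A_i \otimes B_j \ket{\psi}$ genuinely differ, contradicting the uniqueness of $\{c_{ij}\}$. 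Hence every anticommutator is a scalar multiple of $\Id$, so $\mcA$ is Clifford, and the same conclusion on Alice's side shows that the full strategy is Clifford in the sense of the theorem.

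The entanglement bound then follows from Clifford representation theory. Writing $B_j B_k + B_k B_j = V_{jk} \Id$ and letting $r' = \rank\{V_{jk}\}$, we have $\mcA \iso \mcC_{r'}$, whose irreducible representations all have dimension $2^{\lfloor r'/2 \rfloor}$; by Corollary \ref{C:minentangl} the state $\ket{\psi}$ therefore uses at least $\lfloor r'/2 \rfloor$ ebits. Finally $r' \geq r$: the isomorphism $\mcA \iso \mcC_{r'}$ expresses each $B_j$ as a real linear combination of $r'$ distinguished Clifford generators, so by Theorem \ref{T:correleq} the matrix $\{c_{ij}\}$ factors through vectors in an $r'$-dimensional real inner product space, forcing $\rank\{c_{ij}\} \leq r'$.

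The main obstacle is the deformation step in the second paragraph: converting an undetermined anticommutator in $\mcA$ into two optimal strategies with genuinely distinct correlation matrices rather than merely inequivalent representations realising the same correlation. The non-degeneracy hypothesis is essential---without it one can tensor in spectator factors that create representation moduli without affecting correlations---and translating algebraic moduli into correlation moduli requires careful use of the density-matrix correspondence of Theorem \ref{T:tensordescript}.
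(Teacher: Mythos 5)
This statement is attributed to Tsirelson \cite{tsirel87} and the paper does \emph{not} supply its own proof --- it is used as a black box, most notably in the proof of Proposition~\ref{P:cliffuniq}, where the implication ``unique optimal correlation $\Rightarrow$ $\mcA$ is Clifford'' is established \emph{by invoking} Theorem~\ref{T:tsirel1}. Your plan proceeds by first showing that $\mcA$ is Clifford and then reading off the entanglement bound from Corollary~\ref{C:minentangl}. Since the Clifford-ness of $\mcA$ is logically downstream of Theorem~\ref{T:tsirel1} in the paper's own architecture, you cannot lean on Proposition~\ref{P:cliffuniq} here, and you have to establish that implication independently. You recognise this, which is good, but the deformation step you flag as ``the main obstacle'' is a genuine gap, not a technicality, and the paper's machinery does not close it.

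Concretely: by Theorem~\ref{T:tensordescript} and its proof, an irreducible representation $\pi$ of $\mcA$ on $\C^d$ gives an optimal strategy with $\rho = \Id/d$ and $\bar{A}_i = \frac{1}{c_i}\sum_k G_{ik}B_k$, and the resulting correlation is
\begin{equation*}
    c_{ij} \;=\; \frac{1}{c_i}\sum_k G_{ik}\, \tr\!\bigl(\rho\, B_j B_k\bigr) \;=\; \frac{1}{2 c_i d}\sum_k G_{ik}\, \tr\!\bigl(B_j B_k + B_k B_j\bigr),
\end{equation*}
which depends on the representation only through the \emph{normalised traces} of the anticommutators. Uniqueness of the optimal correlation (equivalently, by Corollary~\ref{C:nonlinear}, uniqueness of the optimal $V$ for $(\Gamma)$) therefore forces $\tfrac{1}{d}\tr\bigl(\pi(X_j X_k + X_k X_j)\bigr)$ to take the same value $2V_{jk}$ in every finite-dimensional irreducible $\pi$, but it does \emph{not} force the operator $\pi(X_j X_k + X_k X_j)$ itself to be scalar. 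A representation in which the anticommutator is non-scalar but has the right normalised trace realises the same unique optimal correlation, so your one-parameter family produces no contradiction. Promoting ``the trace of the anticommutator is fixed'' to ``the anticommutator is a scalar'' is essentially the content of Tsirelson's theorem, and it needs an argument that engages the extremality of $\{c_{ij}\}$ more directly (Tsirelson's original proof works with the geometry of the strategy rather than with the solution algebra of an associated game).

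Two secondary issues. First, extreme points of a convex body need not be exposed, so the supporting hyperplane you start from need not isolate $\{c_{ij}\}$ as the \emph{unique} maximiser; the ``limiting procedure'' you allude to is itself a nontrivial step (one must show the set $C_{mn}$ has no unexposed extreme points, or otherwise arrange uniqueness). Second, Theorem~\ref{T:tensordescript} and Corollary~\ref{C:nonlinear} carry the hypothesis that $G$ has no zero rows or columns, and a generic supporting hyperplane need not satisfy this; you would have to argue that it can be chosen so, without destroying the uniqueness of the maximiser.

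Your final step (from Clifford structure to the $\lfloor r/2\rfloor$ ebit bound via Corollary~\ref{C:minentangl} and the rank inequality $\rank\{c_{ij}\}\le r'$) is sound once the earlier steps are in place, and you are right that it is the deformation step that carries the weight. But that step is not a gap to be filled routinely: it is the theorem.
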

One way of understanding Tsirelson's theorem is that it associates a Clifford
algebra to every extreme point of $C_{mn}$. XOR non-local games are in
one-to-one correspondence with supporting hyperplanes of $C_{mn}$ via the cost
matrix, so Theorem \ref{T:tensordescript} describes how to associate a
$C^*$-algebra to every supporting hyperplane of $C_{mn}$. Thus the two results
are complementary, and correspond to two approaches to describing convex sets:
by extreme points and by supporting hyperplanes. Furthermore, we now have two
ways of showing that an XOR non-local game requires high entanglement: if the
representations of the solution algebra are known then we can apply Corollary
\ref{C:minentangl}; and if we can show that the optimization problem $(\Phi)$
has a unique solution of high rank then this solution must be an extreme point
of $C_{mn}$ and we can apply Theorem \ref{T:tsirel1}.  Both approaches work for
the examples of games requiring high entanglement presented in Section
\ref{S:examples}. 

This raises the question of whether Theorems \ref{T:tensordescript} and
\ref{C:minentangl} tell us anything new about optimal strategies and
entanglement requirements, especially since $\mcA$ is almost always Clifford.
The answer is yes, for interesting reasons. Theorem \ref{T:tsirel1} does
not give a full description of the optimal strategies for a game $G$ unless
$(\Phi)$ has a unique solution.
\begin{prop}\label{P:cliffuniq} The solution algebra $\mcA$ for a game $G$ with
    no zero rows or columns is Clifford if and only if there is a unique
    quantum correlation which is optimal for $G$.
\end{prop}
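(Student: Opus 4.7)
The plan is to use Theorem~\ref{T:tensordescript} to identify optimal non-degenerate strategies with density-matrix representations of $\mcA$, derive an explicit formula for the associated correlation, and then read off both directions of the equivalence. For an optimal strategy $\{A_i\},\{B_j\},\ket{\psi}$ with marginal $(B_j,\rho)$ on $H_2$, equality in the Cauchy--Schwarz estimate defining the marginal row bias $c_i$ forces $(A_i\otimes I)\ket{\psi} = (I\otimes Y_i/c_i)\ket{\psi}$ with $Y_i := \sum_k G_{ik}B_k$; combined with the fact (from Theorem~\ref{T:tensordescript}) that $\rho$ commutes with the image of $\mcA$, this yields
\[
c_{ij} \;=\; \frac{1}{2c_i}\sum_k G_{ik}\,\tr\bigl(\rho\{B_k,B_j\}\bigr).
\]
The forward direction of the proposition is then immediate: if $\mcA$ is Clifford, write $\{X_k,X_j\} = V_{kj}\Id$ in $\mcA$; then $\tr(\rho\{B_k,B_j\}) = V_{kj}$ in every density-matrix representation, and the formula determines $c_{ij}$ from $G$, the $c_i$, and the $V_{kj}$ alone, so every optimal strategy produces the same correlation.

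For the converse, suppose the optimal correlation $c$ is unique. Then $c$ is an extreme point of $C_{mn}$ (a linear functional attaining its maximum on a convex set at a unique point must do so at an extreme point), and Tsirelson's Theorem~\ref{T:tsirel1} shows that every non-degenerate optimal strategy $\pi$ is itself Clifford, with $\{B_j,B_k\} = V_{jk}^{(\pi)}\Id$. The core of the proof is to show that $V_{jk}^{(\pi)}$ does not depend on $\pi$. To this end I would identify $V_{jk}^{(\pi)}/2$ with the Gram matrix of the vector strategy $\{v_j^{(\pi)}\}$ for $c$ associated to $\pi$ via Theorem~\ref{T:correleq}, observe that at optimum the $v_j^{(\pi)}$ are unit vectors lying in an $r$-dimensional subspace (with $r = \rank c$), and then invoke that the minimum-dimension vector strategy for an extreme point of $C_{mn}$ is unique up to an orthogonal transformation of $\R^r$. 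Since orthogonal transformations preserve inner products, the Gram matrix is a function of $c$ alone, so the $V_{jk}$ are representation-independent. The element $\{X_j,X_k\} - V_{jk}\Id$ then vanishes in every irreducible representation of $\mcA$ (viewed as a density-matrix representation with $\rho$ a scalar multiple of the identity), and hence vanishes in $\mcA$ by the universal property; so $\mcA$ is Clifford.

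The main obstacle is precisely this representation-independence of $V_{jk}^{(\pi)}$. Applying the correlation formula to two optimal representations only yields $G(V^{(\pi_1)}-V^{(\pi_2)})_{\cdot j}=0$, i.e.\ that the column differences lie in $\ker G$, which is not strong enough when $G$ fails to be injective. The missing rigidity must be imported from Tsirelson's extreme-point analysis in the vector-strategy formulation; the no-zero-rows-or-columns hypothesis is what ensures the optimal vectors $u_i$ and $v_j$ genuinely span the $r$-dimensional range of $c$, so uniqueness up to orthogonal transformation really does pin down the full Gram matrix and not merely its image under $G$.
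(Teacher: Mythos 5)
Your outline matches the paper's: associate to each optimal strategy the anticommutator matrix $V^{(\pi)}$, argue that ``$\mcA$ Clifford'' is equivalent to $V^{(\pi)}$ being representation-independent, and use Tsirelson's Theorem~\ref{T:tsirel1} in the converse to supply the Clifford relation in each representation. The forward direction, and the closing step that a relation vanishing in every irreducible representation vanishes in $\mcA$, both track what the paper does; the paper simply packages the bookkeeping through the optimization problem $(\Gamma)$ of Corollary~\ref{C:nonlinear}.

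The gap you flag is the right one to worry about, but the obstacle you diagnose (the kernel of $G$) is an artifact of using only the row half of the marginal-bias structure; this is exactly why the hypothesis reads ``no zero rows \emph{or columns}.'' From the row relation $c_i u_i = \sum_j G_{ij} v_j$ you get $c_i\, (u_i\cdot u_l) = \sum_j G_{ij}\, c_{lj}$, so the Gram matrix of the $u_i$'s is a function of $c$ alone; the column relation $d_j v_j = \sum_i G_{ij} u_i$ (Corollary~\ref{C:mbias} applied to columns, with $d_j\neq 0$) then gives $d_j d_k\, V^{(\pi)}_{jk} = \sum_{i,l} G_{ij} G_{lk}\, (u_i\cdot u_l)$, recovering $V^{(\pi)}$ outright from $c$ with no reference to $\ker G$. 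This is precisely the content of the ``one-to-one relationship'' clause of Corollary~\ref{C:nonlinear}, which the paper's proof invokes in a single line and which converts ``unique optimal correlation'' directly into ``unique optimal $V$.'' Your proposed substitute — rigidity of minimal-dimension vector strategies for extreme points of $C_{mn}$ up to an orthogonal map of $\R^r$ — is a much heavier import from Tsirelson's structure theory, is not stated or proved anywhere in this paper, and is unnecessary once the column relation is used; you should replace it with the direct recovery above.
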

\begin{proof}
    For $\mcA$ to be Clifford means that $X_i X_j + X_j X_i = 2 V_{ij} \Id$ for
    all $i,j$ and some fixed $n \times n$ matrix $V$. In Proposition
    \ref{C:nonlinear}, we will give an optimization problem $(\Gamma)$ with the
    property that $X_i X_j + X_j X_i = 2 V_{ij} \Id$ in some representation of
    $\mcA$ if and only if $V$ is an optimal solution to $(\Gamma)$. When $G$
    has no zero rows or columns there is a one-to-one relationship between
    optimal solutions of $(\Gamma)$ and optimal quantum correlations. If $\mcA$
    is Clifford then $(\Gamma)$ will have a unique solution, so there must be a
    unique optimal quantum correlation.

    On the other hand, if there is a unique quantum correlation, then
    $(\Gamma)$ will also have a unique solution. By Theorem \ref{T:tsirel1},
    the relation $X_i X_j + X_j X_i = 2 V_{ij} \Id$ will hold in every
    representation of $\mcA$, with $V$ the unique solution to $(\Gamma)$, so
    this relation holds in $\mcA$ itself.
\end{proof}
In addition, the strategy of minimum entanglement is not always Clifford. 
\begin{prop}\label{P:goodexam}
    For any $n \geq 1$ there is a XOR non-local game such that any optimal
    strategy which is Clifford uses at least $n$ ebits more than the strategy
    of minimum entanglement. For an example, see the game $CL(n)$ described in
    Section \ref{S:examples}.
\end{prop}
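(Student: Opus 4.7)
The plan is to use the explicit family $CL(n)$ from Section \ref{S:examples} and analyze its solution algebra $\mcA_n$ via the dictionary of Theorem \ref{T:tensordescript} and Corollary \ref{C:minentangl}. First I would compute the generators and defining relations of $\mcA_n$ directly from the cost matrix and marginal row biases of $CL(n)$ using the definition in Section \ref{S:cstar}. Next I would exhibit a low-dimensional representation of $\mcA_n$, of some dimension $N$, in which some pairwise anticommutator $X_iX_j + X_jX_i$ fails to be a scalar multiple of the identity, so that the corresponding representation is not Clifford. By Corollary \ref{C:minentangl}, realising this representation with the maximally mixed state $I/N$ yields an optimal quantum strategy of entanglement $\log_2 N$, establishing $\log_2 N$ as the actual minimum entanglement and the ``small'' side of the claimed gap.

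Then I would show that any Clifford optimal strategy uses at least $\log_2 N + n$ ebits. The existence of the non-Clifford representation from the previous step shows that $\mcA_n$ itself is not Clifford, so by Proposition \ref{P:cliffuniq} the game has more than one optimal correlation, and each Clifford optimal strategy factors $\mcA_n$ through a Clifford quotient $\mcC_r$, where $r$ is the rank of the anticommutator matrix $V$ determined by that correlation. The design of $CL(n)$ will ensure that every $V$ arising in this way has rank at least $2(\log_2 N + n)$, so the smallest representation of the corresponding $\mcC_r$ has dimension $2^{\lfloor r/2 \rfloor} \geq N \cdot 2^n$, giving the required $n$-ebit gap.

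The main obstacle is the construction of $CL(n)$ itself: producing a supporting hyperplane of $C_{mn}$ whose face of optimal correlations is large enough to admit a non-extreme (hence non-Clifford) point corresponding to a low-dimensional representation, yet whose extreme points all yield anticommutator matrices of rank at least $2(\log_2 N + n)$. Once $CL(n)$ and the defining relations of $\mcA_n$ are specified explicitly, the verification reduces to a finite computation involving the linear algebra of the anticommutator matrices together with the representation theory of Clifford algebras recalled in Section \ref{S:cstar}, combined with the correspondence between optimal strategies and density-matrix representations established in Theorem \ref{T:tensordescript}.
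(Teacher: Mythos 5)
Your high-level outline matches the structure of the paper's argument: exhibit a low-dimensional representation of the solution algebra of $CL(n)$ (giving an upper bound on the minimum entanglement), and show that every Clifford optimal strategy corresponds, via Corollary~\ref{C:nonlinear}, to a solution $V$ of $(\Gamma)$ of large rank (giving a lower bound of $\lfloor (\rank V)/2 \rfloor$ ebits on Clifford strategies). However, both of the technical ingredients that make this work are missing, and you flag them yourself as ``the main obstacle.'' These are precisely the content of the proof, so the proposal has a genuine gap rather than a different route.

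Concretely, the paper supplies (a) an explicit low-dimensional representation of the solution algebra of $CL(n)$: recall that the generators of the solution algebra are indexed by tuples $(i_1,\ldots,i_k)$ (vertices of the graph), and the map $X_{i_1,\ldots,i_k} \mapsto i^{\binom{k}{2}} Y_{i_1}\cdots Y_{i_k}$ into the rank-$n$ Clifford algebra $\mcC_n$ gives a representation of dimension $2^{\lfloor n/2 \rfloor}$; in this representation the anticommutator of $X_{\tilde i}$ and $X_{\tilde j}$ is not a scalar whenever the words $Y_{i_1}\cdots Y_{i_k}$ and $Y_{j_1}\cdots Y_{j_l}$ commute in $\mcC_n$ without being proportional, so the representation is genuinely non-Clifford. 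And (b) the rank lower bound on every optimal $V$ for $(\Gamma)$: Proposition~\ref{P:highrank} proves by a pigeonhole argument --- projecting the triple-index vectors $v(ijk)$ onto the pair subspaces $W(ij)$ and counting --- that every optimal $V$ has rank at least $(4n-8)/3$. Neither ``exhibit a representation where some anticommutator is non-scalar'' nor ``the design of $CL(n)$ will ensure every $V$ has rank at least $2(\log_2 N + n)$'' is established in your proposal; these are nontrivial facts about the specific combinatorics of $CL(n)$. Taking $n = 6t+2$, the paper's explicit representation uses at most $3t+1$ ebits, while the rank bound forces any Clifford strategy to use at least $4t$ ebits, giving a gap of $t-1$ that grows without bound. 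Without the pigeonhole argument and the explicit Clifford-algebra embedding, the proposal reduces the proposition to two unproven claims about $CL(n)$ rather than proving it.
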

In particular, the correlation of a strategy of minimum entanglement does not
have to be an extreme point of $C_{mn}$, and the minimum entanglement required
by a game is not a function of the rank of the optimal quantum correlation. 

\subsection{Stability and approximate strategies}

Stability questions arise naturally in applications of Theorem
\ref{T:tensordescript}. For example, suppose we only have an
approximation $\hat{c}_i$ to the marginal biases $c_i$. Are the
representations of the algebra determined by the relations
\begin{equation*}
    X_j^2 = \Id \text{ for all } 1 \leq j \leq n, \text{ and }
    \left( \sum_j G_{ij} X_j \right)^2 = \hat{c}_i^2 \cdot \Id
        \text{ for all } 1 \leq i \leq m.
\end{equation*}
close to representations of the solution algebra $\mcA$? This question is
related to the study of stable relations for $C^\ast$-algebras, and the broader
area of Hyers-Rasiass-Ulam stability in functional analysis and group theory.
We use the following definition of stability due to \cite{loring:1993}---note
that a norm condition $\norm{B_i} \leq \norm{X_i}_{\mcA}$ is added to simplify
some of the later analysis. 
\begin{defn} Let $\mcA$ be a universal $C^\ast$-algebra generated by
    self-adjoint indeterminates $X_1,\ldots,X_n$ satisfying relations
    $R_1,\ldots,R_k$.  An $\eps$-approximate representation of $\mcA$ is a
    collection of bounded self-adjoint operators $B_1,\ldots,B_n$ on a Hilbert
    space $H$, such that $\norm{B_i} \leq \norm{X_i}_{\mcA}$ for all $1 \leq i
    \leq n$ and $\norm{R_i(B_1,\ldots,B_n)} \leq \eps$ for all $1 \leq i \leq
    k$.

    A set of relations $R_1,\ldots,R_k$ is \emph{stable} (for
    finite-dimensional Hilbert spaces) if there is a constant $\delta > 0$ and
    function $f : [0,\delta) \arr [0,+\infty)$ such that if $0 < \eps < \delta$
    and $(B_1,\ldots,B_n)$ is an $\eps$-approximate represention of $\mcA$ on a
    finite-dimensional Hilbert space $H$ then there is a representation
    $B_1',\ldots,B_n'$ of $\mcA$ on $H$ with $\norm{B_i - B_i'} \leq f(\eps)$.
\end{defn}
A note on norms: we use $\norm{\cdot}$ to denote the operator
norm induced by the Hilbert space structure. On finite-dimensional Hilbert
spaces we can also use the Frobenius norm $\norm{X}_F = \sqrt{\tr(X^\ast X)}$.
The two norms are related by the inequality $\norm{\cdot} \leq \norm{\cdot}_F
\leq \sqrt{d} \norm{\cdot}$, where $d$ is the dimension. 

\begin{prop}\label{P:cliffstable} 
    \
    \begin{itemize}
        \item The Clifford algebra $C_r$ is stable with constant $\delta = 1 /
            (250 r^2)$ and function $f(\eps) = 5 r^2 \eps / 2$.
        \item If $\mcA$ is strongly Clifford of rank $r$ then there is a
            constant $\delta > 0$, such that if $\mcA$ has an $\eps$-approximate
            representation with $\eps < \delta$ then there is an
            exact representation of $\mcA$ of the same dimension.
        \item Almost all $m \times n$ games with $m \geq \binom{n}{2}$
            are strongly Clifford.
    \end{itemize}
\end{prop}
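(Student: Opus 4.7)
The plan is to handle the three items separately, with the main technical work concentrated in item~(1).

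For item~(1), I carry out a ``Clifford Gram--Schmidt'' procedure. Given an $\eps$-approximate representation $B_1,\dots,B_r$ of $\mcC_r$ with $\eps < 1/(250 r^2)$, first replace each $B_i$ by $B_i(B_i^2)^{-1/2}$ using the continuous functional calculus: since $\norm{B_i^2 - \Id}\leq \eps$, the spectrum of $B_i^2$ lies in $[1-\eps,1+\eps]$ and the resulting operator is self-adjoint, squares exactly to $\Id$, and lies within $O(\eps)$ of $B_i$. Then iterate over pairs $i<j$: replace $B_j$ by its anti-commuting part $\tfrac{1}{2}(B_j - B_i B_j B_i)$, which is $O(\eps)$-close (using $\norm{B_i B_j B_i + B_j} = \norm{(B_iB_j + B_jB_i)B_i} \leq \eps$ once $B_i^2 = \Id$), and re-normalise its square by the functional calculus step above. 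The crucial observation is that renormalisation preserves anti-commutations already established: if $B_iB_j + B_jB_i = 0$ and $B_i^2 = \Id$, then $B_iB_j^2 = -B_jB_iB_j = B_j^2 B_i$, so $B_i$ commutes with $(B_j^2)^{-1/2}$. Summing the $O(\eps)$ contributions over the at most $r + \binom{r}{2}$ corrections and tracking the amplification gives the announced $f(\eps) \leq 5r^2\eps/2$.

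For item~(2), the strongly-Clifford hypothesis says each $Y_iY_j + Y_jY_i - V_{ij}\Id$ is a scalar linear combination of the defining relations of $\mcA$. The matrix $V$ is positive semidefinite of rank $r$: PSD because in any representation $\sum_{ij} v_iv_j V_{ij} \Id = 2(\sum_i v_i Y_i)^2$, and of rank $r$ by hypothesis. Factor $V = 2 b b^T$ with $b$ an $n\times r$ matrix of rank $r$, and set $a = (b^T b)^{-1} b^T$, so that $a V a^T = 2 I_r$. Given an $\eps$-approximate representation $B_1,\dots,B_n$ of $\mcA$, the hypothesis produces a constant $K$ depending only on the presentation with $\norm{B_iB_j + B_jB_i - V_{ij}\Id}\leq K\eps$, whence $Z_k := \sum_i a_{ki} B_i$ is an $O(\eps)$-approximate representation of $\mcC_r$ on the same Hilbert space $H$. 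For $\eps$ small enough, item~(1) yields an exact Clifford representation $\{Z_k'\}$ on $H$. Composing with the isomorphism $\mcA \cong \mcC_r$ given by $Y_i \mapsto \sum_k b_{ik} X_k$ then produces an exact representation of $\mcA$ of the same dimension.

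For item~(3), expanding $(\sum_j G_{ij} X_j)^2 = c_i^2 \Id$ and reducing with $X_j^2 = \Id$ turns the $i$th defining relation of the solution algebra into
\[
\sum_{j<k} G_{ij} G_{ik} (X_j X_k + X_k X_j) = \bigl(c_i^2 - \textstyle\sum_j G_{ij}^2\bigr) \Id.
\]
Thus $\mcA$ is strongly Clifford precisely when the $m \times \binom{n}{2}$ coefficient matrix $M$ with $M_{i,(j,k)} = G_{ij} G_{ik}$ has full column rank $\binom{n}{2}$. The failure set is the common zero locus of the $\binom{n}{2} \times \binom{n}{2}$ minor determinants of $M$, a proper algebraic subvariety of the space of cost matrices as long as some minor is not identically zero. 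To exhibit one, take $m = \binom{n}{2}$, fix a bijection $i \leftrightarrow (j_i, k_i)$ with $j_i < k_i$, and let the $i$th row of $G$ have exactly two nonzero entries in positions $j_i, k_i$; then after permutation $M$ is diagonal with nonzero entries. For $m > \binom{n}{2}$, append arbitrary rows. The failure set has Lebesgue measure zero.

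The main obstacle is the error bookkeeping in item~(1): after each renormalisation one must verify that the Clifford relations already enforced survive and that the accumulated operator-norm errors through $O(r^2)$ iterative steps fit within the advertised $5r^2\eps/2$ bound. Preservation of anti-commutation is handled by the commutation-with-$(B_j^2)^{-1/2}$ argument noted above, but tracking how each correction perturbs $\norm{B_j^2 - \Id}$ and the other anti-commutators --- so that the constants compound at most linearly in $r^2$ rather than exponentially --- requires somewhat delicate estimates.
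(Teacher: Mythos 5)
Your arguments for items (2) and (3) are correct and track the paper's own reasoning closely: item (3) is exactly the paper's linear-algebra genericity argument over the space of symmetric $n\times n$ matrices (you first reduce modulo $X_j^2=\Id$, where the paper keeps the diagonal relations $E^{jj}$ explicit, but the two formulations are equivalent), and item (2) is the paper's reduction from $\mcA$ to $\mcC_r$ via a linear change of generators followed by an appeal to item (1). So for these two items there is nothing new, and nothing missing.

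For item (1) you have taken a genuinely different route---a ``Clifford Gram--Schmidt''---where the paper instead builds an $\eps'$-representation of the finite group $G_r\subset\mcC_r$ generated by $\pm Y_1,\ldots,\pm Y_r$ and invokes Kazhdan's stability theorem for $\eps$-representations of amenable groups, which produces the nearby exact representation in a single non-iterative step. Your iterative scheme is plausible but has a real gap, which you yourself flag without resolving: the error bookkeeping. The naive estimate is that the $B_k$-anticommuting projection changes $B_j$ by $O(\norm{B_kB_j+B_jB_k})$, and each such change can perturb every remaining anticommutator $E_{ij}:=B_iB_j+B_jB_i$ ($i>k$) by a comparable amount; iterating over $j-1$ pairs and bounding crudely gives $2^{O(j)}\eps$, i.e.\ exponential growth, which would swamp the advertised $5r^2\eps/2$. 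The procedure can in fact be made to give a polynomial bound, but only by exploiting a finer structural identity that the crude triangle-inequality accounting misses: when $B_k$ is a unitary involution that exactly anticommutes with $B_i$ (both already fixed), the $B_k$-projection sends $E_{ij}$ to $\tfrac{1}{2}(E_{ij}+B_kE_{ij}B_k)$, whose norm is no larger than $\norm{E_{ij}}$; and a similarly careful analysis (rather than the generic $2\norm{\Delta B_j}$ bound) is needed to show that the renormalization step inflates the remaining $E_{ij}$ only by a multiplicative factor $1+O(\eps)$ per step rather than by an additive $\Theta(\eps)$. None of this is in your write-up, so the stated constants are unsupported. If you do not want to supply those estimates, the paper's Kazhdan-theorem argument sidesteps iterative error accumulation entirely and is the cleaner route to the specific $\delta=1/(250r^2)$, $f(\eps)=5r^2\eps/2$.
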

The proof of Proposition \ref{P:cliffstable} is given in Section
\ref{S:cliffstable}. Although we do not make use of the fact, the solution
algebra of an $m \times n$ game is stable if it is strongly Clifford of rank
$n$.  The author does not know whether or not the relations for the solution
algebra are stable in general. 

A quantum strategy is said to be $\eps$-optimal for a game $G$ if it has
success bias within $\eps$ of $\bias_q(G)$. It is possible to use the solution
algebra to study $\eps$-optimal strategies as $\eps \arr 0$.  It will be clear
from the proof of Theorem \ref{T:tensordescript} that an approximate
representation of $\mcA$ will be the marginal of a near-optimal strategy for
$G$. A weak converse is also true. 
\begin{thm}\label{T:approxrep}
    For every $m \times n$ XOR-non-local game with no zero rows or columns
    there are constants $C,C' > 0$ such that if $0 < \eps < C$ and $\{B_j : j
    \in T\}$,$\rho$ is the marginal of an $\eps$-optimal strategy supported on
    a Hilbert space $H$ with $\dim H = d < +\infty$, then there is a projection
    $P$ on $H$ such that $\{P B_j P\}$ is a $(C' d^{3/2}
    \eps^{1/8})$-approximate representation for the solution algebra of $G$.
\end{thm}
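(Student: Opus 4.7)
My plan is to mimic the proof of Theorem \ref{T:tensordescript} with quantitative error estimates, then use a spectral projection of $\rho$ to convert density-weighted bounds into operator-norm bounds. First, the marginal row biases $c_i$ and the analogous column biases $d_j$ arise as optimal dual variables to the SDP formulation of $(\Phi)$, yielding a positive semidefinite certificate
$\begin{pmatrix}\Diag(c) & -G \\ -G^T & \Diag(d)\end{pmatrix} \succeq 0$
with $\sum_i c_i + \sum_j d_j = 2\bias_q(G)$. Cholesky-factoring this matrix (first diagonalising the $\Diag(c)$ block, then handling the Schur complement) and substituting the operator-valued indeterminates $A_i\otimes \Id$ and $\Id\otimes B_j$ yields an operator sum of squares
\begin{equation*}
    2\bigl(\bias_q(G)\Id - T\bigr) = \sum_i R_i^2 + \sum_l Q_l^2,
\end{equation*}
where $T = \sum_{ij} G_{ij} A_i\otimes B_j$, $R_i = c_i^{1/2} A_i\otimes \Id - c_i^{-1/2}\Id\otimes M_i$ with $M_i = \sum_j G_{ij} B_j$, and each $Q_l$ is a linear combination of the $\Id\otimes B_j$'s alone. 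The $\eps$-optimality hypothesis $\bra{\psi}(\bias_q(G)\Id - T)\ket{\psi} \leq \eps$ then forces $\norm{R_i\ket{\psi}}^2 \leq 2\eps$ and $\norm{Q_l\ket{\psi}}^2 \leq 2\eps$ term-by-term.

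Next, from $\norm{(c_i A_i\otimes \Id - \Id\otimes M_i)\ket{\psi}} = O(\sqrt{c_i\eps})$, I would multiply by $c_i A_i\otimes \Id + \Id\otimes M_i$ and use $A_i^2 = \Id$ together with $[A_i\otimes \Id,\Id\otimes M_i]=0$ to conclude $\norm{(\Id\otimes (M_i^2 - c_i^2))\ket{\psi}} = O(\sqrt{\eps})$, i.e., $\tr(\rho(M_i^2 - c_i^2)^2) = O(\eps)$. Running the symmetric argument for the Bob block gives $\tr(\rho_1(N_j^2 - d_j^2)^2) = O(\eps)$ on the Alice marginal $\rho_1$, with $N_j = \sum_i G_{ij} A_i$. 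The Bob-side SOS also yields $\norm{(\Id\otimes B_j - d_j^{-1} N_j\otimes \Id)\ket{\psi}} = O(\sqrt{\eps/d_j})$, so the marginals of $(\Id\otimes B_j)\ket{\psi}\bra{\psi}(\Id\otimes B_j)$ and $(N_j/d_j\otimes \Id)\ket{\psi}\bra{\psi}(N_j/d_j\otimes \Id)$ on $H_2$ are close in trace norm by contractivity of partial trace; these marginals are $B_j\rho B_j$ and $\rho^{1/2}(N_j^2/d_j^2)^T\rho^{1/2}$ respectively, where the transpose is taken in the Schmidt basis. Transferring the bound $\tr(\rho_1(N_j^2-d_j^2)^2) = O(\eps)$ via the Schmidt identification gives $\norm{\rho - \rho^{1/2}(N_j^2/d_j^2)^T\rho^{1/2}}_F = O(\sqrt{\eps})$, so the triangle inequality yields $\norm{[\rho,B_j]}_F = \norm{\rho - B_j\rho B_j}_F = O(\sqrt{\eps})$.

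For the projection, let $P_\eta$ denote the spectral projection of $\rho$ onto $[\eta,1]$. Then $\rho \geq \eta P_\eta$, giving $\norm{P_\eta(M_i^2 - c_i^2)P_\eta}_F^2 \leq \tr(\rho(M_i^2-c_i^2)^2)/\eta = O(\eps/\eta)$. Since $\rho$ need not have any spectral gap, to control $\norm{(\Id - P_\eta)B_j P_\eta}_F$ I would use an averaging argument: integrating against $\eta$ uniform on $(0,1]$ and applying Cauchy-Schwarz gives
\begin{equation*}
    E_\eta\bigl[\norm{(\Id - P_\eta)B_j P_\eta}_F^2\bigr]
        = \tfrac12\sum_{k,l}|B_{j,kl}|^2 |\lambda_k - \lambda_l|
        \leq \tfrac12 \norm{B_j}_F \norm{[\rho,B_j]}_F = O(\sqrt{d\eps}),
\end{equation*}
using $\norm{B_j}_F^2 = d$. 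Restricting $\eta$ to a sub-interval of $(0,1]$ bounded away from zero (losing only a constant factor in the expectation) and taking a union bound over $j$ produces a single $\eta$ for which both $\norm{P_\eta(M_i^2-c_i^2)P_\eta}_F$ and $\norm{(\Id-P_\eta)B_j P_\eta}_F$ are small. Setting $P := P_\eta$ and using the algebraic identities
\begin{equation*}
    (PB_jP)^2 - P = -P B_j(\Id - P) B_j P
    \quad \text{and} \quad
    (PM_iP)^2 - c_i^2 P = P(M_i^2 - c_i^2) P - P M_i(\Id - P) M_i P,
\end{equation*}
together with $\norm{(\Id - P) M_i P} \leq \sum_j |G_{ij}|\, \norm{(\Id - P) B_j P}$, converts the Frobenius bounds into operator-norm bounds on the two defining relations; balancing the threshold $\eta$ yields the stated $O(d^{3/2}\eps^{1/8})$-approximate representation.

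The hardest step is obtaining the commutator bound $\norm{[\rho,B_j]}_F = O(\sqrt{\eps})$. The identity $\rho B_j = B_j\rho$ holds exactly for optimal strategies as a consequence of Theorem \ref{T:tensordescript}, but the quantitative version requires tracking two different marginals of $\ket{\psi}$ through the Schmidt isomorphism, and each intermediate passage between vector-norm, Frobenius, and trace-norm estimates costs a power of $\eps$. The averaged construction of $P$ in the last paragraph is equally delicate: any naive pigeonhole search for a spectral gap of $\rho$ would either lose a factor of $d$ or force $P$ to have dimension $O(1/\eta)$, and it is the interaction of these two losses that produces the polynomial-in-$d$ prefactor and the fractional $\eps^{1/8}$ exponent in the final bound.
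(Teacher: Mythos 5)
Your proposal is correct in outline and follows a genuinely different route from the paper. The paper first establishes Theorem \ref{T:mbias} (a quantitative vector-strategy bound with an $\eps^{1/4}$ dependence coming from bounding $\norm{\sum_j G_{ij} v_j - c_i u_i}^2$ by $O(\delta)$ with $\delta = O(\sqrt{\eps})$), converts this via Lemma \ref{L:tensorcommute} into the estimates $\norm{(c_i^2 - M_i^2)\rho} = O(\eps^{1/4})$ and $\norm{[\rho,B_j]}_F = O(\eps^{1/4})$, and then constructs $P$ by a two-case analysis on $\tau = \norm{\rho - \Id/d}$: if $\tau$ is small, take $P = \Id$ and replace $\rho$ by $\Id/d$; if $\tau$ is large, invoke the Babai--Friedl eigenvalue-gap lemma (Lemma \ref{L:evalgap}) to find a spectral decomposition $H = W_1 \oplus W_2$ with a gap of size $\geq \tau/d$, and take $P = P_1$. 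Balancing at $\tau_0 \sim d^{1/2}\eps^{1/8}$ gives the stated exponents. You instead work the SDP dual slack into an explicit operator sum of squares by Cholesky-factoring it, which gives the sharper $\norm{R_i\ket{\psi}}^2 \leq 2\eps$ (an $\eps^{1/2}$ rather than $\eps^{1/4}$ bound — this is a genuine improvement over what Theorem \ref{T:mbias} delivers, and your commutator estimate $\norm{[\rho,B_j]}_F = O(\sqrt{\eps})$ is likewise stronger; you could in fact shortcut your Schmidt-isomorphism detour by applying the paper's Lemma \ref{L:tensorcommute} directly to the Bob-side SOS factor). You then use a randomized spectral threshold $P_\eta$ and an averaging identity for $E_\eta[\norm{(\Id - P_\eta)B_j P_\eta}_F^2]$ in place of Babai--Friedl. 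This is an entirely legitimate and in some ways cleaner construction, and when the constants are tracked it actually gives a bound of order $d^{3/2}\eps^{1/2}$, which subsumes the stated $d^{3/2}\eps^{1/8}$.

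Two points deserve care when you flesh this out. First, when you ``restrict $\eta$ to a sub-interval bounded away from zero,'' you need $\eta_0$ at most $\lambda_{\max}(\rho)$ to ensure $P_\eta \neq 0$; since $\lambda_{\max}(\rho) \geq 1/d$, taking $\eta$ uniform on an interval of the form $(\lambda_{\max}/2, \lambda_{\max}]$ works but inflates the conditional expectation by a factor of order $d$, which is where the $d^{3/2}$ prefactor really enters (not from balancing). While $P = 0$ trivially satisfies the theorem as stated, a non-zero $P$ is needed for Corollary \ref{C:dimbound} to have content, so this restriction is not optional. Second, the identity $\norm{[\rho,B_j]}_F = \norm{\rho - B_j\rho B_j}_F$ uses that $B_j$ is a genuine $\pm 1$-valued observable (hence a self-adjoint unitary); that is guaranteed by the definition of a quantum strategy, but is worth noting explicitly since your SOS certificate only forces $\norm{B_j} \leq 1$ a priori.
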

Specifically, $C \geq \min_j d_j^{16} / (100 (m+n))$ ($d_j$ is the $j$th
marginal column bias), and $C' \leq 15 (m+n)^{\frac{1}{8}}$. The proof of this
theorem is given in Section \ref{S:approxrep}.

For near-optimal strategies we would like to understand the dimensions of the
Hilbert spaces supporting $\eps$-optimal strategies, for $\eps$ in a
neighbourhood of zero. For this to make sense, we look at finite-dimensional
spaces only. Cleve, Hoyer, Toner, and Watrous have shown that an $m \times n$
XOR non-local game has an $\eps$-optimal quantum strategy on a Hilbert space of
dimension $(m+n+1)^{O(\eps^{-2})}$ \cite{cleve:2004}. Regev has pointed out
that this bound can be made independent of $m$ and $n$, so that fixing the
precision $\eps$ upper bounds the entanglement cost for all XOR non-local games
\cite{cleve:private}.  Theorem \ref{T:approxrep} gives lower bounds on the
dimension of approximate solutions when the solution algebra $\mcA$ is strongly
Clifford.

\begin{corollary}\label{C:dimbound} Let $G$ be an XOR non-local game with no
    zero rows or columns, and solution algebra $\mcA$ which is strongly
    Clifford of rank $r$.  Then there are constants $C$ and $C''$ such that if
    $0 < \eps \leq C$ then any $\eps$-optimal strategy is supported on a
    Hilbert space of dimension at least $\min \left( C'' \eps^{-1/12},
    2^{\lfloor r/2 \rfloor} \right)$.
\end{corollary}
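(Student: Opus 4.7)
The plan is to chain Theorem \ref{T:approxrep} with the stability statement in the second bullet of Proposition \ref{P:cliffstable}, and then exploit the representation theory of $\mcC_r$. Suppose $d$ is the dimension of an $\eps$-optimal strategy for $G$, and take $C$ small enough (in particular below the constant of Theorem \ref{T:approxrep}) that Theorem \ref{T:approxrep} applies. That theorem yields a projection $P$ on the marginal Hilbert space $H$ such that $\{P B_j P\}$ is a $\delta_0$-approximate representation of $\mcA$ acting on $PH$, where $\delta_0 = C' d^{3/2} \eps^{1/8}$ and $\dim PH \le d$.

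Next, apply the second bullet of Proposition \ref{P:cliffstable}: since $\mcA$ is strongly Clifford of rank $r$, there is a stability threshold $\delta > 0$ such that whenever $\delta_0 < \delta$, the approximate representation $\{P B_j P\}$ is close to an exact representation of $\mcA$ of the same dimension $\dim PH$. Since $\mcA$ is Clifford, $\mcA \iso \mcC_r$, and every non-zero representation of $\mcC_r$ is a direct sum of irreducibles of dimension $2^{\lfloor r/2 \rfloor}$; hence either $\dim PH = 0$ or $\dim PH \ge 2^{\lfloor r/2 \rfloor}$. In the latter case, $d \ge \dim PH \ge 2^{\lfloor r/2 \rfloor}$ and we are done.

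In the complementary case, if $d < 2^{\lfloor r/2 \rfloor}$ and $P$ is non-zero, the hypothesis $\delta_0 < \delta$ must fail, i.e., $C' d^{3/2} \eps^{1/8} \ge \delta$, which rearranges to $d \ge (\delta/C')^{2/3} \eps^{-1/12}$. Setting $C'' := (\delta/C')^{2/3}$ then yields the stated bound $d \ge \min\bigl(C'' \eps^{-1/12},\, 2^{\lfloor r/2 \rfloor}\bigr)$. The main obstacle is ruling out the degenerate case $P = 0$, for which the approximate representation lives on the zero-dimensional Hilbert space and the stability argument is vacuous. This should follow from how $P$ is produced in the proof of Theorem \ref{T:approxrep}: the projection is built from the support of the marginal density operator, which is non-trivial for any strategy with positive success bias. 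Ensuring this rigorously is a matter of picking $C$ small enough that the approximate relation $(P B_j P)^2 \approx \Id$ prevents $P$ from collapsing to zero, which gives the required upper bound on $C$ and completes the proof.
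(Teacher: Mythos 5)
Your argument follows the paper's proof essentially verbatim: invoke Theorem \ref{T:approxrep} to get a $C' d^{3/2}\eps^{1/8}$-approximate representation of $\mcA$, feed it into the strong-Clifford stability statement of Proposition \ref{P:cliffstable}, and use that every nonzero representation of $\mcC_r$ has dimension at least $2^{\lfloor r/2\rfloor}$; the choice $C'' = (\delta/C')^{2/3}$ is exactly the paper's. The one place you were tentative, ruling out $P = 0$, is not actually a matter of shrinking $C$ or appealing to the support of the density operator: in the proof of Theorem \ref{T:approxrep} the projection is either $\Id$ or the projection $P_1$ onto the subspace $W_1$ produced by Lemma \ref{L:evalgap}, and that lemma explicitly guarantees $W_1$ and $W_2$ are both non-trivial, so $P\neq 0$ holds unconditionally.
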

\begin{proof}
    Let $C,C'$ to be the constants from Theorem \ref{T:approxrep}. Suppose $G$
    has an $\eps$-optimal solution on a Hilbert space $H$ of dimension $d$, $0
    \leq \eps \leq C$. By Theorem \ref{T:approxrep}, $\mcA$ has a $C' d^{3/2}
    \eps^{1/8}$-approximate representation of dimension $d$.  Take $C'' = (C' /
    \delta)^{-2/3}$, where $\delta$ is the constant from Proposition
    \ref{P:cliffstable}. If $d < C'' \eps^{-1/12}$ then 
    \begin{equation*} 
        C' d^{3/2} \eps^{1/8} < \delta,
    \end{equation*} 
    so Proposition \ref{P:cliffstable} implies that $\mcC_r$ has an exact
    representation on $H$, implying that $d \geq 2^{\lfloor r/2 \rfloor}$.
\end{proof}
The lower bound in Corollary \ref{C:dimbound} appears to be quite weak.
For instance, it is possible to show that dimension increases linearly with
$1-\eps$ up to dimension $n$, for a specific family of games with question sets
of size $n(n-1)$ and $n$ (see Proposition \ref{P:vectlb}).

\section{Marginal biases}\label{S:mbias}

\begin{thm}\label{T:mbias}
    For every $m \times n$ XOR non-local game $G$ there is a collection of
    non-negative constants $\{ c_i\ :\ i \in S\}$ such that if a vector strategy
    $\{u_i\}$,$\{v_j\}$ for $G$ is $\eps$-optimal, $0 \leq \eps <
    1/\left(4 (m+n)\right)$, then
    \begin{equation*}
        \norm{\sum_j G_{ij} v_j - c_i u_i} \leq \sqrt{10} (m+n)^{1/4} \eps^{1/4}.
    \end{equation*}
    If $G$ has no zero rows then the constants $c_i$ are non-zero.
\end{thm}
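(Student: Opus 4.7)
The plan is to rephrase the problem in the vector formulation via Tsirelson's theorem and exploit a direct-sum ``averaging'' trick. Writing $w_i(v) := \sum_j G_{ij} v_j$ for a tuple $v = (v_j)$ with $\norm{v_j} \leq 1$, the optimal Alice responses given $v$ are $u_i = w_i(v)/\norm{w_i(v)}$, producing bias $\phi(v) := \sum_i \norm{w_i(v)}$. Thus $\bias_q(G) = \max_v \phi(v)$, and a maximizer $v^*$ exists by compactness. I define $c_i := \norm{w_i(v^*)}$ and aim to show (a) this is independent of $v^*$, and (b) any $\eps$-optimal $(u,v)$ forces $\norm{w_i(v) - c_i u_i}$ small.

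The key construction is the direct-sum trick: for any two tuples $v^*, \tilde v \in (\R^N)^n$, set $v^\sharp_j := (v^*_j \oplus \tilde v_j)/\sqrt{2} \in \R^{2N}$, so $\norm{v^\sharp_j} \leq 1$ and $\norm{w_i(v^\sharp)} = \sqrt{(\norm{w_i(v^*)}^2 + \norm{w_i(\tilde v)}^2)/2}$. For well-definedness, if $v^*, \tilde v$ are both maximizers with corresponding norms $c_i, \tilde c_i$, feasibility forces $\phi(v^\sharp) \leq \bias_q(G)$, while QM--AM gives $\phi(v^\sharp) \geq \sum_i (c_i + \tilde c_i)/2 = \bias_q(G)$. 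Equality in QM--AM termwise forces $c_i = \tilde c_i$.

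For the quantitative statement, let $(u,v)$ be $\eps$-optimal and write $d_i := \norm{w_i(v)}$. Since $u_i \cdot w_i(v) \leq d_i$ and $\sum_i u_i \cdot w_i(v) \geq \bias_q(G) - \eps$, I get $\sum_i (d_i - u_i \cdot w_i(v)) \leq \eps$ and $\phi(v) \geq \bias_q(G) - \eps$. Applying the trick now to $v^*$ and $v$ gives
\[
\sum_i \bigl[\sqrt{(c_i^2 + d_i^2)/2} - (c_i + d_i)/2\bigr] \leq (\bias_q(G) - \phi(v))/2 \leq \eps/2.
\]
Using the identity $\sqrt{(a^2+b^2)/2} - (a+b)/2 = (a-b)^2/\bigl(4(\sqrt{(a^2+b^2)/2} + (a+b)/2)\bigr)$, together with $c_i, d_i \leq \sum_j |G_{ij}| \leq 1$, the denominator is at most $2$, so $\sum_i (c_i - d_i)^2 \leq 4\eps$. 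Then expanding $\norm{w_i(v) - c_i u_i}^2 \leq (c_i - d_i)^2 + 2 c_i (d_i - u_i \cdot w_i(v))$ and summing (using $c_i \leq 1$) yields $\sum_i \norm{w_i(v) - c_i u_i}^2 \leq 6\eps$, from which the stated bound follows a fortiori in the range $\eps < 1/(4(m+n))$.

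Finally, if row $i$ is nonzero but $c_i = 0$ then $w_i(v^*) = 0$; in $\R^{N+1}$ I perturb by $v'_j := \sqrt{1-\eta^2}\, v^*_j + \eta\, \mathrm{sgn}(G_{ij})\, e_{N+1}$, which adds $\eta \sum_j |G_{ij}| > 0$ to $\norm{w_i(\cdot)}$ while changing each other $\norm{w_k(\cdot)}$ only at order $\eta^2$ (since $e_{N+1} \perp w_k(v^*)$), contradicting optimality of $v^*$. The main conceptual obstacle is recognizing that one direct-sum trick simultaneously handles both well-definedness and the approximate statement by reducing everything to the strictness of QM--AM; once this is in place, the remaining computations are routine.
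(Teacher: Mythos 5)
Your proof is correct, and it takes a genuinely different route from the paper. The paper's argument is built on semidefinite programming duality: the constants $c_i$ (and the column analogues $d_j$) are extracted from an optimal dual solution of the SDP relaxation of $(\Phi)$, and the quantitative bound follows from the dual slack matrix $S$ together with a Cauchy--Schwarz chain on Frobenius norms ($2|(SX)_{ii}| \le 2\sqrt{\tr X}\sqrt{\tr S}\sqrt{\tr(XS)}$). Your argument instead defines $c_i := \norm{w_i(v^*)}$ for a maximizer $v^*$ (obtained by compactness), and exploits the convexity of the feasible set via the direct-sum operation $v^\sharp_j := (v^*_j \oplus \tilde v_j)/\sqrt 2$: since $\norm{w_i(v^\sharp)} = \sqrt{(\norm{w_i(v^*)}^2+\norm{w_i(\tilde v)}^2)/2}$, feasibility plus the quadratic--arithmetic mean inequality simultaneously yield uniqueness of the $c_i$ (the exact case) and, with the explicit remainder identity for QM$-$AM, the quantitative estimate. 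The perturbation argument for $c_i > 0$ is clean (lifting to $\R^{N+1}$ along $\eta\,\mathrm{sgn}(G_{ij})e_{N+1}$) and plays the role that positive-semidefiniteness of $S$ plays in the paper.

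What each buys: the SDP-duality proof naturally produces the column biases $d_j$ in one stroke (which the paper uses later), and it sits inside a computable framework. Your convexity proof is more self-contained, avoids the dual program entirely, and in fact gives a \emph{stronger} bound: you obtain $\norm{\sum_j G_{ij}v_j - c_i u_i} \le \sqrt{6\eps}$, which is $O(\eps^{1/2})$ with no $(m+n)$ dependence, versus the paper's $\sqrt{10}(m+n)^{1/4}\eps^{1/4}$. (Indeed your bound holds for all $\eps \ge 0$; the restriction $\eps < 1/(4(m+n))$ is only needed to verify that $\sqrt{6\eps}$ is dominated by the weaker bound stated in the theorem.) One small remark: in the final nonvanishing step you should note, as you effectively do, that for $k\ne i$ the norms $\norm{w_k(\cdot)}$ decrease by at most $O(\eta^2)$ because $\norm{w_k(v')}^2 = (1-\eta^2)\norm{w_k(v^*)}^2 + \eta^2(\cdots)^2 \ge (1-\eta^2)\norm{w_k(v^*)}^2$, so the aggregate loss is at most $m\eta^2/2$, beaten by the linear gain $\eta\sum_j|G_{ij}|$ for $\eta$ small.
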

When $\eps = 0$, $\sum_j G_{ij} v_j = c_i u_i$, so the sums $\sum_g G_{ij}
u_i \cdot v_j$ are constant (and equal to $c_i$) for every optimal vector
strategy $\{u_i\}$,$\{v_j\}$. There are also \emph{marginal column biases}
$d_j$ playing the same role for the columns. The marginal column biases are
non-zero when $G$ has no zero columns.  

\begin{cor}\label{C:mbias} Let $G$ be an XOR non-local game with no zero rows.
    A vector strategy $\{u_i\}$,$\{v_j\}$ for $G$ is optimal if and only if
    \begin{equation*}
        \sum_j G_{ij} v_j = c_i u_i \text{ for all } 1 \leq i \leq m,
    \end{equation*}
    where $c_i$ is the $i$th marginal row bias of $G$.
\end{cor}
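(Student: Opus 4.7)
My plan is to establish the two implications separately. The forward direction---optimality implies the equation---is immediate from Theorem~\ref{T:mbias}: setting $\eps = 0$ collapses the approximation bound to an equality, giving $\sum_j G_{ij} v_j = c_i u_i$ for every $i$.

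For the converse, I would substitute the assumed equation into the expression for the success bias and reorder the summation:
\[
    \sum_{i,j} G_{ij}\, u_i \cdot v_j \;=\; \sum_i u_i \cdot \Bigl(\sum_j G_{ij} v_j\Bigr) \;=\; \sum_i c_i \norm{u_i}^2.
\]
Performing the same manipulation on any known optimal vector strategy (one exists by Theorem~\ref{T:correleq}) yields $\bias_q(G) = \sum_i c_i$; here one uses that the $u_i$ in an optimal strategy are automatically unit vectors, since otherwise one could strictly increase the bias by replacing $u_i$ with $\bigl(\sum_j G_{ij} v_j\bigr)/\norm{\sum_j G_{ij} v_j}$, which is well-defined because $c_i > 0$ under the no-zero-rows hypothesis and Theorem~\ref{T:mbias}.

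It remains to conclude that the strategy under consideration attains the value $\sum_i c_i$. Since $c_i > 0$ and $\norm{u_i} \le 1$, the displayed bias equals $\sum_i c_i$ precisely when $\norm{u_i} = 1$ for every $i$. The main bookkeeping step---and the only real point where care is required---is therefore to verify that the equation $\sum_j G_{ij} v_j = c_i u_i$, together with the marginal identification $c_i = \norm{\sum_j G_{ij} v_j}$ that must hold for a genuine (non-degenerate) vector strategy, pins down $\norm{u_i} = 1$. Taking norms of the equation reduces this to confirming $\norm{\sum_j G_{ij} v_j} = c_i$, which is the content of the forward direction applied symmetrically to the $v_j$-marginal and which follows by another invocation of Theorem~\ref{T:mbias}.
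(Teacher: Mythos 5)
The forward direction---Theorem~\ref{T:mbias} with $\eps = 0$---is fine. The converse, however, contains a circularity. To force $\norm{u_i} = 1$ you take norms in the hypothesis and reduce to showing $\norm{\sum_j G_{ij} v_j} = c_i$, which you say ``follows by another invocation of Theorem~\ref{T:mbias}.'' But Theorem~\ref{T:mbias} is a statement about $\eps$-optimal strategies; invoking it here already presumes the strategy is optimal, which is exactly what the converse is trying to establish. The gap is real: $\sum_j G_{ij} v_j = c_i u_i$ together with $\norm{u_i} \le 1$ does not by itself force $\norm{\sum_j G_{ij} v_j} = c_i$. For instance, take the $2 \times 2$ game whose cost matrix has rows $(1/2,\,0)$ and $(1/4,\,1/4)$: here $c_1 = c_2 = 1/2$ and the optimal correlation has $v_1 = v_2$, but if one instead chooses $v_1 \perp v_2$ then $\norm{\sum_j G_{2j} v_j} = \sqrt{2}/4 < c_2$. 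Putting $u_1 = v_1$ and $u_2 = (\sum_j G_{2j} v_j)/c_2 = \tfrac12(v_1+v_2)$ gives vectors in the unit ball satisfying your displayed equation for both rows, yet the bias is $3/4 < 1 = \bias_q(G)$.

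The way to close the gap is to note that a vector strategy, in the sense used throughout Section~\ref{S:mbias}, has $\norm{u_i} = \norm{v_j} = 1$: this is forced by the constraint $X_{ii}=1$ in the program $(P)$ and is made explicit in Corollary~\ref{C:nonlinear}. With $\norm{u_i}=1$ given, the computed bias is $\sum_i c_i \norm{u_i}^2 = \sum_i c_i$ and there is nothing left to verify beyond the identity $\bias_q(G) = \sum_i c_i$. That identity is cleanest to read off the proof of Theorem~\ref{T:mbias}: at $\eps = 0$ one has $|c_i - \sum_j G_{ij}\, u_i\cdot v_j| = 0$, and summing over $i$ gives the claim. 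This also sidesteps the case $u_i = 0$, which your renormalization argument for $\bias_q(G) = \sum_i c_i$ leaves unaddressed.
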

Alternatively, we can characterize optimal vector strategies without any
reference to the vectors $u_i$ at all.
\begin{cor}\label{C:nonlinear} Let $\{v_j\}$ be a collection of $n$ unit vectors.
    Then there are unit vectors $\{u_i\}$ such that the quantum correlation $\{
    u_i \cdot v_j\}$ is an optimal solution to $G$ if and only if the positive
    semidefinite matrix $\{v_i \cdot v_j\}$ is an optimal solution to 
    \begin{align*}
        (\Gamma) \quad\quad \max\ \sum_i \ \sqrt{\sum_{j,k} G_{ij}G_{ik}V_{jk}} 
            \ :\ V\text{ is an } n\times n \text{ positive semidefinite matrix 
                    with } V_{ii} = 1.
    \end{align*}
    If $G$ has no zero rows or columns there is a one-to-one relationship
    between solutions to $(\Gamma)$ and optimal quantum correlations for $G$.
    If $V$ is an optimal solution then the $i$th marginal row bias is given
    by $c_i^2 = \sum_{j,k} G_{ij}G_{ik}V_{jk}$.
\end{cor}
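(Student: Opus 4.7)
The plan is to rewrite the bias of a vector strategy as $\sum_i u_i \cdot w_i$ with $w_i := \sum_j G_{ij} v_j$, apply Cauchy--Schwarz to reduce the optimization over $\{u_i\}$, and identify the result with $(\Gamma)$.

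First, I would show that $\bias_q(G) = \max (\Gamma)$. Cauchy--Schwarz gives $\sum_i u_i \cdot w_i \leq \sum_i \norm{w_i}$, with equality iff each $u_i$ is a unit vector parallel to $w_i$ (or $w_i = 0$). Thus for any vector strategy with unit $v_j$'s, the bias is at most $\sum_i \norm{w_i} = \sum_i \sqrt{\sum_{j,k} G_{ij} G_{ik} V_{jk}}$, which is the $(\Gamma)$-objective at $V := (v_j \cdot v_k)$. Conversely, for any $V$ feasible in $(\Gamma)$, pick a Gram factorization $V = (v_j \cdot v_k)$ by unit vectors and set $u_i := w_i/\norm{w_i}$; Theorem~\ref{T:mbias} guarantees $\norm{w_i} = c_i > 0$ when $G$ has no zero rows, so the resulting vector strategy has bias equal to $\text{obj}(\Gamma)(V)$. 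Combining this with Tsirelson's Theorem~\ref{T:correleq} and the observation that one may rescale $v_j$'s to be unit without loss (by padding into a larger Hilbert space) yields $\bias_q(G) = \max (\Gamma)$.

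With this established, the biconditional is immediate. For unit $\{v_j\}$ and $V := (v_j \cdot v_k)$, suppose unit $\{u_i\}$ exist with $\{u_i \cdot v_j\}$ optimal; then the chain
\[
    \bias_q(G) \;=\; \sum_i u_i \cdot w_i \;\leq\; \text{obj}(\Gamma)(V) \;\leq\; \max(\Gamma) \;=\; \bias_q(G)
\]
is tight, so $V$ is optimal for $(\Gamma)$. Conversely, if $V$ is optimal for $(\Gamma)$, the vectors $u_i := w_i/c_i$ are unit and achieve the optimum. The formula $c_i^2 = \sum_{j,k} G_{ij} G_{ik} V_{jk}$ drops out of the equality case. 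For the one-to-one correspondence, assume no zero rows or columns. The forward map sends optimal $V$ to the correlation $c_{ij} := \frac{1}{c_i}(GV)_{ij} = u_i \cdot v_j$; surjectivity follows by passing through any optimal vector strategy. For injectivity, apply the symmetric column-side version of Theorem~\ref{T:mbias} to obtain $v_j = \frac{1}{d_j}\sum_i G_{ij} u_i$ with $d_j > 0$, whence
\[
    V_{jk} \;=\; v_j \cdot v_k \;=\; \frac{1}{d_k}\sum_i G_{ik}(u_i \cdot v_j) \;=\; \frac{1}{d_k}\sum_i G_{ik}\, c_{ij},
\]
recovering $V$ from $C$ alone.

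The main obstacle is the injectivity step: unlike the forward direction, which is a direct Cauchy--Schwarz computation on rows, injectivity needs a symmetric statement about columns. This is what forces the hypothesis that $G$ has no zero columns and the column-side analogue $d_j > 0$ of Theorem~\ref{T:mbias}, which itself can be obtained by transposing the game.
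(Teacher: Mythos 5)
Your proof is correct and uses the approach the paper's placement of Corollary~\ref{C:nonlinear} (immediately after Theorem~\ref{T:mbias} and Corollary~\ref{C:mbias}) suggests: rewrite the bias as $\sum_i u_i\cdot w_i$, apply Cauchy--Schwarz in $i$, identify $\sum_i\|w_i\|$ with the $(\Gamma)$-objective, and use the marginal-bias relations $\sum_j G_{ij}v_j=c_i u_i$ (row side) and $\sum_i G_{ij}u_i=d_j v_j$ (column side) for the $\Leftarrow$ direction and for injectivity. The paper does not print a separate proof, but this is the argument it intends.

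One small misstatement worth flagging: in the converse step, where you take an \emph{arbitrary} feasible $V$, factorize by unit $v_j$'s, and set $u_i:=w_i/\|w_i\|$, you cite Theorem~\ref{T:mbias} to say $\|w_i\|=c_i>0$. That theorem only controls $\|w_i\|$ for (near-)optimal strategies; for a generic feasible $V$ one can have $w_i=0$ even when row $i$ of $G$ is nonzero. The fix is trivial (if $w_i=0$, take $u_i$ any unit vector; the contribution is then $0=\|w_i\|$, so Cauchy--Schwarz is still tight), and the same remark applies to $u_i:=w_i/c_i$ in the $\Leftarrow$ direction of the biconditional when $G$ is permitted to have zero rows. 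Neither issue affects the logic; the identity $\|w_i\|=c_i$ at optimality, the tightness chain for $\Rightarrow$, the use of padding to pass between $\|v_j\|\le 1$ and $\|v_j\|=1$, and the column-side inversion $V_{jk}=\tfrac{1}{d_k}\sum_i G_{ik}c_{ij}$ are all as intended.
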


To prove Theorem \ref{T:mbias}, consider the semidefinite programming (SDP)
formulation for optimal value of $G$ given in \cite{cleve:2007}. Specifically, let 
\begin{equation*}
    B = \begin{pmatrix} 0 & \frac{1}{2} G \\ \frac{1}{2} G^T & 0 \end{pmatrix}.
\end{equation*}
Then $\bias_q(G)$ is equal to
\begin{equation*}
    (P) \quad \quad \max\ \tr(B X)\ :\ X_{ii} = 1 \text{ for } 1 \leq i \leq m+n, 
        \text{ and } X \succeq 0,
\end{equation*}
where $\succeq$ denotes the partial order induced by the semidefinite cone. An
overview of semidefinite programming techniques can be found in
\cite{boyd:1996}. There is a direct correspondence between vector strategies
and feasible solutions of $(P)$: if $\{u_i\}$,$\{v_j\}$ is a vector strategy,
$U$ is the matrix with columns given by the vectors $u_i$, and $V$ is the matrix
with columns given by the vectors $v_j$, then the corresponding feasible solution to
$(P)$ is the matrix
\begin{equation}\label{yousandvees} 
    X = \begin{pmatrix} U^T \\ V^T \end{pmatrix} \cdot 
        \begin{pmatrix} U & V \end{pmatrix} 
        = \begin{pmatrix} U^T U & U^T V \\ V^T U & V^T V \end{pmatrix},
\end{equation}
and this construction can be run in reverse as well. The success bias
can also be calculated using the dual formulation:
\begin{align*}
    (D) \quad\quad \min \frac{1}{2} \sum_{1 \leq i \leq m} c_i + \frac{1}{2} \sum_{1 \leq i \leq n} d_j\ : \\
        \frac{1}{2} \begin{pmatrix} \Delta(c) & 0 \\ 0 & \Delta(d)\end{pmatrix} \succeq B.
\end{align*}
Here $\Delta(c)$ denotes the diagonal matrix with the entries of $c$ on the diagonal.
 
\begin{proof}[Proof of Theorem \ref{T:mbias}] 
    Let $c_1,\ldots,c_m$,$d_1,\ldots,d_n$ be optimal for $(D)$, and let 
    \begin{equation*}
        S = \frac{1}{2} \begin{pmatrix} \Delta(c) & 0 \\ 0 & \Delta(d)\end{pmatrix} - B.
    \end{equation*}
    Let $X$ be the positive semidefinite matrix corresponding to a (not
    necessarily optimal) vector strategy $\{u_i\}$, $\{v_j\}$. Then
    \begin{equation*}
        (S X)_{ii} = \begin{cases} 
            \frac{1}{2} c_i - \frac{1}{2} \sum_j G_{ij} u_i \cdot v_j & 1 \leq i \leq m \\
            \frac{1}{2} d_{i-m} - \frac{1}{2} \sum_j G_{j(i-m)} u_j \cdot v_{i-m} & m+1 \leq i \leq m+n \\
        \end{cases},
    \end{equation*}
    and in particular
    \begin{equation*}
        \tr (S X) = \frac{1}{2} \sum_{1 \leq i \leq m} c_i + \frac{1}{2}
            \sum_{1 \leq j \leq n} d_j - \tr (B X) = \bias_q(G) - \tr(B X),
    \end{equation*}
    where the last equality holds by strong duality.  The vector strategy in
    question is $\eps$-optimal if and only if $X$ is $\eps$-optimal for $(P)$,
    meaning that $\bias_q(G) - \tr(B X) \leq \eps$.  The following inequality
    holds for every $1 \leq i \leq m$ and every $\eps$-optimal vector strategy:
    \begin{align*}
        \left| c_i - \sum_j G_{ij} u_i \cdot v_j\right| & = 2 |(S X)_{ii}| 
            \leq 2 \norm{SX}_F \leq 2 \norm{X^{1/2}}_F \norm{S^{1/2}}_F \norm{X^{1/2}
            S^{1/2}}_F \\ 
        & = 2 \sqrt{\tr(X)} \sqrt{\tr(S)} \sqrt{\tr(XS)} \leq 2 \sqrt{m+n}
            \sqrt{\bias_q(G)} \sqrt{\eps},
    \end{align*}
    where $\norm{\cdot}_F$ is the Frobenius norm.  Let $\delta = 2 \sqrt{m+n}
    \sqrt{\bias_q(G)} \sqrt{\eps}$. Then
    \begin{equation*}
        \norm{c_i u_i - \sum_j G_{ij} v_j}^2 = c_i^2 - 2 c_i \sum_j G_{ij} u_i
            \cdot v_j + \norm{ \sum_j G_{ij} v_j}^2.
    \end{equation*}
    We know that $\sum_j G_{ij} u_i \cdot v_j > c_i - \delta$. To bound $\norm{\sum_j G_{ij} v_j}$,
    let $u'_i$ be the vector $\sum_j G_{ij} v_j$, renormalized to a unit vector. Then
    \begin{equation*}
        \sum_j G_{ij} u_i' \cdot v_j = \norm{\sum_j G_{ij} v_j} \geq \sum_{j} G_{ij} u_i \cdot v_j,
    \end{equation*}
    so $\{u_i'\}$,$\{v_j\}$ is also an $\eps$-optimal strategy for $G$. Hence $\norm{\sum_j G_{ij}
    v_j}  = \sum_j G_{ij} u_i' \cdot v_j \leq c_i + \delta$. We can conclude that
    \begin{equation*}
        \norm{c_i u_i - \sum_j G_{ij} v_j}^2 \leq c_i^2 - 2 c_i (c_i - \delta) + (c_i + \delta)^2
            = 4 c_i \delta + \delta^2, 
    \end{equation*}
    and this last expression is at most $5 \bias_q(G)^{1/2} \delta$ if $\eps
    \leq 1/(4(m+n))$. Since $\eps_q(G) \leq 1$, this shows that the constants
    $c_i$ satisfy the statement of Theorem \ref{T:mbias}. Uniqueness follows
    from the fact that $c_i = \sum_j G_{ij} u_i \cdot v_j$ for every optimal
    vector strategy $\{u_i\}$,$\{v_j\}$. Finally $S$ is positive semidefinite,
    so if the $i$th row of $G$ is non-zero then $c_i$ must be strictly
    positive.
\end{proof}

\section{Quantum strategies: the exact case}\label{S:exact}

We begin with the proof of Theorem \ref{T:tensordescript}. For this we
introduce some notation, which we will keep fixed for this and also the next
section.  Suppose $\ket{\psi}$ is an element of $H_1 \otimes H_2$. If we pick a
basis $\ket{i}$ for $H_1$ then we can write $\ket{\psi} = \sum \ket{i} \lambda
\ket{i}$, where $\lambda : H_1 \arr H_2$ is a linear transformation. With this
notation, the Schmidt coefficients for $\ket{\psi}$ are the eigenvalues of
$\lambda$, and the partial trace of $\ket{\psi}$ with respect to $H_1$ is $\rho
= \lambda \lambda^\ast$, a density operator on $H_2$. Having picked a basis on
$H_1$, we can write any element of $\mcB(H_1)$ as a matrix. If $A \in
\mcB(H_1)$, let $\bar{A}$ denote the linear transformation constructed by
taking the entry-wise complex conjugate of the matrix of $A$.
Note that the operation $A \mapsto \bar{A}$ depends on the choice of basis for
$H_1$. 
\begin{lemma}\label{L:tensorcommute} Let $A$ and $B$ be Hermitian operators on
    Hilbert spaces $H_1$ and $H_2$ respectively. Then $\norm{ \left(A \otimes
    \Id - \Id \otimes B \right) \ket{\psi} } = \norm{ \lambda \bar{A} - B
    \lambda }_F.$ Consequently, if $\norm{ \left(A \otimes \Id - \Id \otimes B
    \right) \ket{\psi} } \leq \eps$, then $\norm{ \rho B - B \rho }_F \leq 2
    \eps$, or in other words $\rho$ approximately commutes with $B$.
\end{lemma}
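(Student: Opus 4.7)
The plan is to compute both sides of the claimed norm identity directly in the distinguished basis $\{\ket{i}\}$ for $H_1$ used to define $\lambda$, and then deduce the commutator bound by inserting $\lambda \bar{A} \lambda^*$ as an intermediate term in $\rho B - B \rho$.

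First I would expand $(A \otimes \Id)\ket{\psi}$ using $\ket{\psi} = \sum_i \ket{i} \otimes \lambda \ket{i}$ and the matrix entries $A_{ji} = \bra{j} A \ket{i}$, obtaining $\sum_{i,j} A_{ji} \ket{j} \otimes \lambda \ket{i}$. Reindexing so the outer sum is over $j$ and observing that, because $A$ is Hermitian, $A_{ji} = \overline{A_{ij}} = (\bar A)_{ij}$, the inner sum collapses to $\bar{A}\ket{j}$. This gives $(A \otimes \Id)\ket{\psi} = \sum_j \ket{j} \otimes \lambda \bar{A} \ket{j}$, while trivially $(\Id \otimes B)\ket{\psi} = \sum_j \ket{j} \otimes B\lambda \ket{j}$. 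Subtracting and using orthonormality of the $\ket{j}$'s yields
\begin{equation*}
    \norm{(A \otimes \Id - \Id \otimes B)\ket{\psi}}^2 = \sum_j \norm{(\lambda \bar{A} - B\lambda)\ket{j}}^2 = \norm{\lambda \bar{A} - B\lambda}_F^2,
\end{equation*}
which is the first claim.

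For the commutator estimate, I would write $\rho B - B\rho = \lambda\lambda^* B - B\lambda\lambda^*$ and add and subtract $\lambda \bar{A} \lambda^*$ to get
\begin{equation*}
    \rho B - B \rho = (\lambda \bar{A} - B \lambda)\lambda^* - \lambda(\lambda \bar{A} - B \lambda)^*,
\end{equation*}
using that $(\lambda \bar{A} - B\lambda)^* = \bar{A}\lambda^* - \lambda^* B$, which in turn relies on $B$ and $\bar{A}$ both being Hermitian (the latter follows from Hermiticity of $A$). Applying the triangle inequality together with $\norm{XY}_F \leq \norm{X}\,\norm{Y}_F$, and using $\norm{\lambda} = \norm{\lambda^*} \leq \sqrt{\norm{\rho}} \leq 1$ since $\rho = \lambda\lambda^*$ is a density operator, I get $\norm{\rho B - B\rho}_F \leq 2\norm{\lambda\bar{A} - B\lambda}_F \leq 2\eps$.

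The main (mild) obstacle is bookkeeping around the basis-dependent operation $A \mapsto \bar{A}$: the appearance of $\bar{A}$ rather than $A$ on the right-hand side comes from the index swap in the matrix-vector product, and one must verify that Hermiticity of $A$ makes $\bar{A}$ Hermitian so that the adjoint manipulation in the commutator step is clean. Once the identification $(A \otimes \Id)\ket{\psi} \leftrightarrow \lambda\bar{A}$ is in hand, both halves of the lemma follow from a few lines of operator algebra.
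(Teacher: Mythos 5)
Your proof is correct and follows essentially the same route as the paper: expand $(A\otimes\Id)\ket{\psi}$ and $(\Id\otimes B)\ket{\psi}$ in the fixed basis for $H_1$ to get the Frobenius-norm identity, then bound $\rho B - B\rho$ by inserting the Hermitian operator $\lambda\bar{A}\lambda^*$ as an intermediary. Your decomposition $\rho B - B\rho = (\lambda\bar A - B\lambda)\lambda^* - \lambda(\lambda\bar A - B\lambda)^*$ is precisely the algebraic version of the paper's step of taking the adjoint of $B\rho - \lambda\bar{A}\lambda^*$ and applying the triangle inequality; the only cosmetic difference is that you bound $\norm{\lambda^*}$ by its operator norm rather than by $\norm{\lambda^*}_F = \sqrt{\tr\rho} = 1$, both of which give the same constant.
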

\begin{proof}
    For the first identity, observe that $(A \otimes \Id) \ket{\psi} =
    \sum_i \ket{i} \lambda \bar{A} \ket{i}$, while $(\Id \otimes B) \ket{\psi}
    = \sum_i \ket{i} B \lambda \ket{i}$. So $\norm{\left(A \otimes \Id - \Id
    \otimes B \right) \ket{\psi} }^2 = \norm{\lambda \bar{A} - B \lambda}_F^2$.
   
    If $\norm{ \left(A \otimes \Id - \Id \otimes B \right) \ket{\psi} } \leq
    \eps$ then $\norm{ B \rho  - \lambda \bar{A} \lambda^\ast }_F \leq \norm{ B
    \lambda - \lambda \bar{A} }_F \norm{ \lambda^\ast}_F  \leq \eps$.  But
    $\lambda \bar{A} \lambda^\ast$ is Hermitian, while $(B \rho)^\ast = \rho B$.
\end{proof}

Now we should say something about what it means for an operator strategy to be
non-degenerate. Suppose that $\{A_i\}$, $\{B_j\}$, $\ket{\psi}$ is a quantum
strategy, and let $\mcB_1$ and $\mcB_2$ denote the $C^\ast$-algebras generated by
$A_1,\ldots,A_m$ and $B_1,\ldots,B_n$ respectively. Let $\cl$ denote the
closure of a set in a Hilbert space with respect to the norm topology. 
\begin{lemma}\label{L:nondegen}
    The strategy in question is non-degenerate if and only if $\cl \mcB_2
    \lambda H_1 = H_2$ and $\cl \bar{\mcB}_1 \lambda^\ast H_2 = H_1$.
\end{lemma}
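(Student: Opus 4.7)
The plan is to translate both sides of the claimed equivalence into a statement purely about the operator $\lambda$ and the commutants $\mcB_1'$, $\bar{\mcB}_1'$, $\mcB_2'$, and then apply a standard argument identifying closed invariant subspaces with range projections of the commutant.

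First I would compute, in the notation set up before the lemma, how one-sided operators act on $\ket{\psi} = \sum_i \ket{i} \otimes \lambda \ket{i}$. By the same calculation as in Lemma~\ref{L:tensorcommute}, one has
\begin{equation*}
(P \otimes \Id)\ket{\psi} = \sum_j \ket{j} \otimes \lambda \bar{P}\ket{j}, \qquad (\Id \otimes Q)\ket{\psi} = \sum_i \ket{i} \otimes Q\lambda\ket{i}.
\end{equation*}
Hence $(P \otimes \Id)\ket{\psi} = \ket{\psi}$ if and only if $\lambda \bar{P} = \lambda$, and $(\Id \otimes Q)\ket{\psi} = \ket{\psi}$ if and only if $Q\lambda = \lambda$. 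Taking adjoints (and using that $P$ is self-adjoint iff $\bar P$ is), the Alice condition becomes $\bar{P}\lambda^\ast = \lambda^\ast$.

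The next step is to reformulate commutation with the observables in terms of the generated algebras. Since $P$ commutes with every $A_i$ iff $P \in \mcB_1'$, and the $A_i$ are Hermitian, taking complex conjugates of $PA_i = A_i P$ gives $\bar{P}\bar{A}_i = \bar{A}_i \bar{P}$, so $P \in \mcB_1'$ iff $\bar{P} \in \bar{\mcB}_1'$. Similarly $Q \in \mcB_2'$ is automatic. So non-degeneracy of the strategy is equivalent to the following two statements:
\begin{itemize}
    \item the only projection $Q \in \mcB_2'$ with $Q\lambda = \lambda$ is $Q = \Id$;
    \item the only projection $\tilde{P} \in \bar{\mcB}_1'$ with $\tilde{P}\lambda^\ast = \lambda^\ast$ is $\tilde{P} = \Id$.
\end{itemize}

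Finally I would apply the well-known correspondence between $\mcB$-invariant closed subspaces of $H$ and projections in $\mcB'$. For the $(\Rightarrow)$ direction on Bob's side, let $K := \cl \mcB_2 \lambda H_1$. Since $K$ is a closed $\mcB_2$-invariant subspace of $H_2$, the orthogonal projection $Q_K$ onto $K$ lies in $\mcB_2'$; and since $\lambda H_1 \subseteq K$ we have $Q_K \lambda = \lambda$. Non-degeneracy forces $Q_K = \Id$, i.e.\ $K = H_2$. For the $(\Leftarrow)$ direction, given any projection $Q \in \mcB_2'$ with $Q\lambda = \lambda$, the image $QH_2$ is a closed $\mcB_2$-invariant subspace containing $\lambda H_1$, hence contains $\cl \mcB_2 \lambda H_1 = H_2$, so $Q = \Id$. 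The Alice side is identical with $(\mcB_2, Q, \lambda)$ replaced by $(\bar{\mcB}_1, \tilde{P}, \lambda^\ast)$. There is no real obstacle here—the content of the lemma is just unpacking the definitions via the $\lambda$-representation and the invariant-subspace/projection correspondence.
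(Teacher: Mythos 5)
Your proof is correct and follows essentially the same route as the paper's: compute the one‑sided actions on $\ket{\psi}$ in the $\lambda$-representation, translate $(P\otimes\Id)\ket{\psi}=\ket{\psi}$ and $(\Id\otimes Q)\ket{\psi}=\ket{\psi}$ into $\lambda\bar P=\lambda$ (equivalently $\bar P\lambda^\ast=\lambda^\ast$) and $Q\lambda=\lambda$, observe that $P\in\mcB_1'$ iff $\bar P\in\bar\mcB_1'$, and then invoke the correspondence between closed $\mcB$-invariant subspaces and projections in $\mcB'$. The only cosmetic difference is that you spell out the proper-vs.-improper projection bookkeeping a bit more explicitly than the paper does.
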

\begin{proof}
    Let $Q$ be a projection onto a closed subspace $W$ of $H_2$. By Lemma
    \ref{L:tensorcommute}, $(I \otimes Q) \ket{\psi} = \ket{\psi}$ if and only
    if $Q \lambda = \lambda$, and this occurs if and only if the image of
    $\lambda$ is contained in $W$. Also $Q$ will commute with $\mcB_2$ if and
    only if $W$ is an invariant subspace for $\mcB_2$. Thus there is a
    projection $Q$ commuting with $\mcB_2$ such that $(I \otimes Q) \ket{\psi}
    = \ket{\psi}$ if and only if $\cl \mcB_2 \lambda H_1$ is a strict subset of
    $H_2$. 

    There is a projection $P$ such that $(P \otimes I) \ket{\psi} = \ket{\psi}$
    if and only if $\lambda \bar{P} = \lambda$, and this is equivalent to 
    the condition that $\bar{P} \lambda^\ast = \lambda^\ast$ (note that $\bar{P}$
    is also a projection). $P$ will commute with $\mcB_1$ if and only $\bar{P}$
    commutes with $\bar{\mcB}_1$. Thus we arrive at the requirement that
    $\cl \bar{\mcB}_1 \lambda^\ast H_2 = H_1$.
\end{proof}
From this lemma it is easy to see that every strategy projects down to a unique
non-degenerate strategy. One more special case is important. The kernel of
$\lambda$ is the orthogonal complement of the image of $\lambda^\ast$, so $\cl
\lambda H_1 = \cl \rho H_2$. If $\rho$ commutes with $\mcB_2$ then $\rho H_2$
is an invariant subspace of $\mcB_2$, so $\cl \mcB_2 \lambda H_1 = H_2$ if and
only if $\cl \rho H_2 = H_2$.
\begin{proof}[Proof of Theorem \ref{T:tensordescript}] Suppose we are given
    an optimal non-degenerate strategy $\{A_i\}$, $\{B_j\}$, $\ket{\psi}$ for
    $G$, acting on Hilbert spaces $H_1$ and $H_2$. We fix a basis for $H_1$,
    and use the notation introduced above for $\ket{\psi}$. Define $u_i = (A_i
    \otimes I) \ket{\psi}$ and $v_j = (I \otimes B_j) \ket{\psi}$. The vectors
    $u_1,\ldots,u_m$ and $v_1,\ldots,v_n$ form an optimal vector strategy for
    $G$. Let $c_i$ denote the $i$th marginal row bias for $G$, and $d_j$ the
    $j$th marginal column bias. By Corollary \ref{C:mbias}, we know that
    \begin{equation*}
        d_j (I \otimes B_j) \ket{\psi} = \sum_i G_{ij} (A_i \otimes I) \ket{\psi},
    \end{equation*}
    where each $d_j$ is non-zero.  It follows from Lemma \ref{L:tensorcommute}
    that $\rho$ commutes with all $B_j$. As previously mentioned, Lemma \ref{L:nondegen}
    now implies that the closure of the image of $\rho$ is equal to $H_2$.

    Now we also know that
    \begin{equation*}
        c_i (A_i \otimes I) \ket{\psi} = \sum_j G_{ij} (I \otimes B_j) \ket{\psi},
    \end{equation*}
    and consequently by Lemma \ref{L:tensorcommute} $\sum_j G_{ij} B_j \lambda
    = c_i \lambda \bar{A_i}$. We apply the map $S \mapsto S S^\ast$ to both
    sides to get
    \begin{equation*}
        \left( \sum_j G_{ij} B_j\right)^2 \rho = \left(\sum_j G_{ij} B_j\right) \
            \rho\ \left(\sum_j G_{ij} B_j\right) = c_i^2 \lambda \bar{A_i}^2
            \lambda^\ast = c_i^2 \rho.  
    \end{equation*}
    Since the closure of the image of $\rho$ is equal to $H_2$, we find that
    $\left( \sum_j G_{ij} B_j\right)^2 = c_i^2 I$. Thus the operators
    $\{B_j\},\rho$ describe a cyclic representation of $\mcA$, and the density
    operator $\rho$ commutes with $\mcA$.

    Conversely, suppose we are given a cyclic representation $\{B_j\},\rho$ of
    $\mcA$, acting on a Hilbert space $H$, such that $\rho$ commutes with all
    $B_i$ and $\cl \mcA \rho H = H$. We can conclude that $\cl \rho H = H$, so
    if $\lambda$ is the square root of $\rho$, then $\rho H \subset \lambda H$,
    and hence $\cl \lambda H = H$. Now set $H_1 = H_2 = H$ and $\ket{\psi} =
    \sum_i \ket{i} (\lambda \ket{i})$. Define $\bar{A_i} = \frac{1}{c_i} \sum_j
    G_{ij} B_j $, so that $A_i$ is a self-adjoint operator squaring to the
    identity. From Lemma \ref{L:tensorcommute} we can conclude that 
    \begin{equation*}
        c_i (A_i \otimes I) \ket{\psi} = \sum_j G_{ij} (I \otimes B_j) \ket{\psi},
    \end{equation*}
    and hence
    \begin{align*}
        \sum_{i,j} G_{ij} \bra{\psi} A_i \otimes B_j \ket{\psi} = \sum_i c_i
    \end{align*}
    is the optimal success bias for $G$. Thus $\{A_i\}$, $\{B_j\}$,
    $\ket{\psi}$ is an optimal quantum strategy for $G$. It is non-degenerate
    because $\cl \lambda H = H$ and $\lambda^\ast = \lambda$.
\end{proof}

Now we can give the proof of Corollary \ref{C:minentangl}.
\begin{proof}[Proof of Corollary \ref{C:minentangl}]
    Let $\{B_j\},\rho$ be a cyclic representation of $\mcA$ on a Hilbert space
    $H$ such that $\rho$ commutes with all $B_j$. Let $\alpha_i$ be the
    eigenvectors of $\rho$. By the spectral theorem for self-adjoint compact
    operators (see \cite{zimmer} for the statement of this theorem), $H$
    decomposes into a direct sum of finite-dimensional eigenspaces
    $E_{\alpha_i}$ of $\rho$.  If $v \in E_{\alpha_i}$ is an eigenvector, then
    $\rho B_j v = B_j \rho v = \alpha_i B_j v$, and hence $\mcA$ preserves the
    eigenspaces of $\rho$. Because each $E_{\alpha_i}$ is finite-dimensional,
    each decomposes further into a sum of irreducible representations of
    $\mcA$. Thus we can conclude that $H$ decomposes into a direct sum of
    finite-dimensional irreducible representations $H_k$.

    Now we can write $\rho$ as a sum of density operators which have orthogonal
    support. Namely, let $\rho_k = I / \tr(I)$, where $I$ denotes the identity
    on $H_k$. Then $\rho = \sum p_k \rho_k$, where the $p_k$'s are positive
    scalars summing to one. By Theorem 11.8 of~\cite{nielsenchuang}, $S(\rho) =
    \sum p_k S(\rho_k) + H(p_k)$, where $S$ denotes the von Neumann entropy and
    $H$ denotes the Shannon entropy.  $H(p_k) > 0$ unless there is exactly one
    invariant subspace.  Hence the minimum entanglement will be attained only
    by an irreducible representation.  The entanglement used by an irreducible
    representation is $S(\rho_k) = \log_2 (\dim H_k)$.
\end{proof}

\section{Quantum strategies: the approximate case}\label{S:approxrep}

The goal of this section is to prove Theorem \ref{T:approxrep}. We rely on the
following lemma, which is a concatenation of lemmas concerning eigenvalue gaps
due to Babai and Friedl.
\begin{lemma}[Babai and Friedl, 1991]\label{L:evalgap} Let $\rho$ be a density
    matrix on the finite dimensional Hilbert space $\C^d$ such that $\tau :=
    \norm{ \rho - \Id / d} > 0$.  Then there is an orthogonal decomposition
    $\C^d = W_1 \oplus W_2$ such that
    \begin{itemize}
        \item $W_1$ and $W_2$ are non-trivial invariant subspaces of $\rho$, 
        \item if $\alpha_1$ and $\alpha_2$ are eigenvalues of $\rho$
            with eigenvectors in $W_1$ and $W_2$ respectively, then
            $\alpha_1 \geq \alpha_2 + \tau / d$, and
        \item if $\rho$ approximately commutes with another matrix $S$,
            meaning that $\norm{\rho S - S \rho}_F \leq \eps$, then
            \begin{equation*}
                \norm{P_1 S P_1 + P_2 S P_2 - S}_F \leq \frac{\eps d}{\tau},
            \end{equation*}
            where $P_i$ is the orthogonal projection onto $W_i$.
    \end{itemize}
\end{lemma}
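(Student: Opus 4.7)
The plan is to diagonalize $\rho$, locate a consecutive-eigenvalue gap of size at least $\tau/d$ by pigeonhole, take $W_1$ and $W_2$ to be the spectral subspaces lying above and below this gap, and then bound the block-off-diagonal part of $S$ via the standard entrywise formula for the commutator in an eigenbasis.

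Order the eigenvalues of the Hermitian matrix $\rho$ as $\lambda_1 \geq \cdots \geq \lambda_d$. Since $\tau = \max_i |\lambda_i - 1/d|$ and $\sum_i \lambda_i = 1$, at least one of $\lambda_1 \geq 1/d + \tau$ or $\lambda_d \leq 1/d - \tau$ holds; combined with the averaging constraint, each case forces $\lambda_1 - \lambda_d \geq \tau$. The telescoping identity $\lambda_1 - \lambda_d = \sum_{i=1}^{d-1}(\lambda_i - \lambda_{i+1})$ therefore expresses a quantity at least $\tau$ as a sum of $d-1$ non-negative terms, and pigeonhole yields an index $k$ with $1 \leq k \leq d-1$ and $\lambda_k - \lambda_{k+1} \geq \tau/(d-1) \geq \tau/d$. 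I take $W_1$ to be the span of the eigenvectors with eigenvalue $\geq \lambda_k$ and $W_2$ the span of the remaining ones; these are orthogonal, non-trivial, $\rho$-invariant subspaces, and any eigenvalues $\alpha_1,\alpha_2$ of $\rho$ with eigenvectors in $W_1,W_2$ respectively satisfy $\alpha_1 \geq \lambda_k \geq \lambda_{k+1} + \tau/d \geq \alpha_2 + \tau/d$. This handles the first two bullets.

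For the third bullet I would work in the orthonormal eigenbasis of $\rho$ fixed above. Here $\rho = \Diag(\lambda_1,\ldots,\lambda_d)$, so $(\rho S - S\rho)_{ij} = (\lambda_i - \lambda_j) S_{ij}$ and
\begin{equation*}
    \norm{\rho S - S\rho}_F^2 \;=\; \sum_{i,j}(\lambda_i - \lambda_j)^2 |S_{ij}|^2.
\end{equation*}
Since $P_1 + P_2 = \Id$, the difference $S - P_1 S P_1 - P_2 S P_2$ equals $P_1 S P_2 + P_2 S P_1$, whose only nonzero entries are the ``straddling'' ones with one index in $\{1,\ldots,k\}$ and the other in $\{k+1,\ldots,d\}$; for every such $(i,j)$ we have $|\lambda_i - \lambda_j| \geq \lambda_k - \lambda_{k+1} \geq \tau/d$. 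Consequently
\begin{equation*}
    \norm{P_1 S P_1 + P_2 S P_2 - S}_F^2 \;\leq\; (d/\tau)^2\!\!\sum_{\text{straddling }(i,j)}\!\!(\lambda_i - \lambda_j)^2 |S_{ij}|^2 \;\leq\; (d/\tau)^2\,\eps^2,
\end{equation*}
and taking square roots yields the claimed inequality. There is no serious obstacle in this argument: all the real content is in the pigeonhole step that converts the operator-norm deviation $\tau$ into a gap of size at least $\tau/d$ somewhere in the spectrum, after which the commutator estimate is one line of eigenbasis arithmetic.
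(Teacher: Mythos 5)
Your proof is correct, and it diverges from the paper in one meaningful respect: the paper's proof establishes only the pigeonhole bound $\Delta \geq \tau/d$ on the largest consecutive spectral gap (essentially the same averaging argument you give, phrased as $\tau \leq \lambda_1 - \lambda_d \leq d\Delta$), and then cites Lemma 2.11 and Remark 2.7 of Babai and Friedl for the construction of $W_1, W_2$ and for the third bullet. You instead reproduce the content of that cited result from scratch, via the entrywise identity $(\rho S - S\rho)_{ij} = (\lambda_i - \lambda_j) S_{ij}$ in the eigenbasis and the observation that the off-block-diagonal part $P_1 S P_2 + P_2 S P_1$ consists exactly of the straddling entries, each of which picks up a factor of at least $\tau/d$ in the commutator. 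This gives a self-contained one-page argument where the paper relies on an external reference; the tradeoff is that the paper's citation also implicitly covers the precise formulation of the eigenvalue-gap lemma that you had to re-derive. One small stylistic note: your bound $\lambda_k - \lambda_{k+1} \geq \tau/(d-1)$ is slightly sharper than the $\tau/d$ the lemma asks for, which is fine; and your claim that $W_1, W_2$ are nontrivial holds because the strict gap $\lambda_k > \lambda_{k+1}$ forces $1 \leq k \leq d-1$, so both blocks are nonempty.
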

\begin{proof}
    Write the eigenvalues of $\rho$ in descending order as $\lambda_1 \geq
    \ldots \geq \lambda_d$. Since $\sum \lambda_i = 1$, we must have
    $\lambda_1 \geq 1/d \geq \lambda_n$. Let $\Delta = \max |\lambda_i -
    \lambda_{i+1}|$ be the largest eigenvalue gap. Then
    \begin{equation*}
        \tau = \max |\lambda_i - 1/d | \leq \lambda_1 - \lambda_d \leq d \Delta,
    \end{equation*}
    so $\Delta \geq \tau / d$. Then apply Lemma 2.11 and Remark 2.7 of
    \cite{babaifriedl:1991}.
\end{proof}

\begin{proof}[Proof of Theorem \ref{T:approxrep}]
    We want to find a projection $P$ on $H_2$ such that $\norm{P - (P B_j
    P)^2}$ and $\norm{c_i^2 P - \left(\sum_l G_{il} P B_l P \right)^2 }$
    are bounded above by $C d^{5/3} \eps^{1/12}$ for all $1 \leq i \leq m$ and
    $1 \leq j \leq n$. We start by bounding $\norm{c_i^2 P - \left(\sum_l
    G_{il} P B_l P \right)^2 }$. Let $C' = \sqrt{10} (m+n)^{1/4}$.

    Let $\ket{\psi} = \sum \ket{i} \lambda \ket{i}$ and $\rho = \lambda
    \lambda^\ast$ be the partial trace of $\ket{\psi}$ with respect to $H_1$.
    If $\{A_i\}$,$\{B_j\}$, $\ket{\psi}$ is $\eps$-optimal, then by Theorem \ref{T:mbias},
    $\norm{ \left( c_i A_i \otimes \Id - \sum_j G_{ij} \Id \otimes B_j\right)
    \ket{\psi} } \leq C' \eps^{1/4}$.  It follows from Lemma
    \ref{L:tensorcommute} that $\norm{ c_i \lambda \bar{A} - \sum_j G_{ij} B_j
    \lambda }_F \leq C' \eps^{1/4}$.  It follows that $\norm{ c_i^2 \rho -
    \sum_j G_{ij} B_j \rho \sum_j G_{ij} B_j } \leq 3 C' \eps^{1/4}$.  But
    Lemma \ref{L:tensorcommute} also tells us that $\norm{ \rho \sum_j G_{ij}
    B_j - \sum_j G_{ij} B_j \rho}_F \leq 2 C' \eps^{1/4}$ for all $j$, so
    combining this with the last inequality, we get
    \begin{equation}\label{E:closeineq}
        \norm{\left( c_i^2 - \left( \sum_j G_{ij} B_j\right)^2 \right) \rho } \leq 5 C' \eps^{1/4}.
    \end{equation}
    Inequality (\ref{E:closeineq}) is potentially much weaker than the inequality we want,
    since $\rho$ might have arbitrarily small eigenvalues. Lemma \ref{L:evalgap}
    suggests two ways to strengthen this inequality. The first is to replace
    $\rho$ with the maximally mixed density matrix: let $\tau = \norm{ \rho -
    \Id / d}$ and observe that 
    \begin{align*} 
        \norm{ c_i^2 - \left(\sum_j G_{ij} B_j\right)^2 } & \leq 
            d \norm{ \left(c_i^2 - \left( \sum_j G_{ij} B_j\right)^2 \right) \rho } \\
            & + d \norm{ c_i^2 - \left(\sum_j G_{ij} B_j\right)^2 } \norm{\rho - \Id/d} 
                \leq 5 C' d \eps^{1/4} + 2 d \tau.
    \end{align*}
    If $\tau = 0$ then we can take $P = \Id$ and we will be done. If
    $\tau > 0$ then we can apply Lemma \ref{L:evalgap} to decompose $H_2$ as
    $W_1 \oplus W_2$. Let $P_i$ denote the orthogonal projection onto $W_i$.
    Let $B_j' = P_1 B_j P_1 + P_2 B_j P_2$ and let $\rho_i = P_i \rho P_i$.
    Because $\sum_j G_{ij} B_j$ approximately commutes with $\rho$, we get
    $\norm{ \sum_j G_{ij} B_j - \sum_j G_{ij} B_j' } \leq 2 C' d
    \eps^{1/4} / \tau$. If we let $D_{kl} = \sum_j G_{ij} P_k B_j P_l$, then
    \begin{equation*}
        \sum_j G_{ij} B_j - \sum_j G_{ij} B_j' = \begin{pmatrix} 0 & D_{12} \\
            D_{21} & 0 \end{pmatrix}, 
    \end{equation*}
    so $\norm{D_{12}},\norm{D_{21}} \leq 2 C' d \eps^{1/4} / \tau$, and
    \begin{equation*}
        \norm{\left( c_{i}^2 P_1 - D_{11}^2 - D_{12} D_{21} \right) \rho_1 } 
            \leq 5 C' \eps^{1/4}.
    \end{equation*}
    But $\norm{D_{12} D_{21}} \leq 4 C' d^2 \eps^{1/2} / \tau^2$, so
    \begin{equation*}
        \norm{\left( c_i^2 - D_{11}^2 \right) \rho_1} \leq 5 C' \eps^{1/4} 
            + 4 (C')^2 d^2 \eps^{1/2} / \tau^2.
    \end{equation*}
    Since $\rho_1$ has minimum eigenvalue at least $\tau / d$, we can
    conclude that
    \begin{align*}
        \norm{ c_i^2 P_1 - \left(\sum_j G_{ij} P_1 B_j P_1\right)^2 } & \leq
        \norm{ \left( c_i^2 P_1 - \left(\sum_j G_{ij} P_1 B_j P_1\right)^2\right)
            \rho_1 } \norm{ (\rho_1)^{-1} } \\ 
        & \leq \frac{5 C' d \eps^{1/4}}{\tau} + \frac{4 (C')^2 d^3 \eps^{1/2}}{\tau^3}.
    \end{align*}

    This leaves us to compare the efficacy of two choices for the projection
    $P$ required by the proposition. If we choose $P = \Id$, then we have an
    upper bound on $\norm{ c_i^2 P - \left( \sum_j G_{ij} P B_j P\right)^2 }$
    given by $f_1(\tau) = 5 C' d \eps^{1/4} + 2 d \tau$, while if we choose $P
    = P_1$ we have an upper bound given by $f_2(\tau) = 5 C' d \eps^{1/4} /
    \tau + 4 (C')^2 d^3 \eps^{1/2} / \tau^3$.  Let $\tau_0 = \frac{3}{2}
    (C')^{1/2} d^{1/2} \eps^{1/8}$, and observe that $f_1(\tau_0)$ and
    $f_2(\tau_0)$ are both bounded above by $8 (C')^{1/2} d^{3/2} \eps^{1/8}$.
    Since $f_1$ is increasing, if $\tau \leq \tau_0$ then we can take $P = \Id$
    and get the inequality required by the proposition. If $\tau > \tau_0$,
    then $f_2$ is decreasing on the interval $(0,\infty)$, so we can take $P =
    P_1$ and get the inequality required by the proposition.

    So far we have ignored the inequality we are supposed to get on 
    $\norm{ P - (P B_j P)^2 }$. If we take $P = \Id$, then $P = (P B_j P)^2$.
    So suppose $\tau > \tau_0$ and we take $P = P_1$. Now by Theorem \ref{T:mbias}
    applied to the columns, $\norm{ \left(d_j \Id \otimes B_j - \sum_i G_{ij}
    A_i \otimes \Id \right) \ket{\psi} } \leq C' \eps^{1/4}$.  Hence $\norm{ B_j
    \rho - \rho B_j }_F \leq 2 C' \eps^{1/4} / d_j$. As before, Lemma \ref{L:evalgap}
    tells us that $\norm{P_1 B_j P_2},\norm{P_2 B_j P_1} \leq 2 C' d \eps^{1/4}
    / (d_j \tau)$, so
    \begin{equation*}
        \norm{ P_1 - \left(P_1 B_j P_1\right)^2 } \leq \frac{4 (C')^2 d^2 \eps^{1/2}}
            {d_j^2 \tau^2}.
    \end{equation*}
    Let $f_3(\tau) = 4 (C')^2 d^2 \eps^{1/2} / (d_j^2 \tau^2)$. Then $f_3$ is
    decreasing and
    \begin{equation*}
        f_3(\tau_0) = \frac{8  C' d \eps^{1/4}}{9 d_j^2} \leq (C')^{1/2} d \eps^{1/8},
    \end{equation*}
    where the last inequality comes from the hypothesis that $\eps \leq
    d_j^{16}/ (C')^4$.  So we get $f_3(\tau) \leq 8 (C')^{1/2} d^{3/2}
    \eps^{1/8}$ as long as $\tau > \tau_0$.
\end{proof}

\section{Stable relations for the Clifford algebra}\label{S:cliffstable}

We use the following stability result for amenable groups due to Kazhdan to
prove that the relations defining the Clifford algebra are stable. 
\begin{defn} An $\eps$-representation of a topological group $G$ is
    a continuous map $\phi$ from $G$ to the group of unitary transformations
    on a Hilbert space, such that
    \begin{equation*}
        \norm{ \phi(gg') - \phi(g) \phi(g') } \leq \eps \text{ for all } g,g' \in G.
    \end{equation*}
    Here $\norm{\cdot}$ is the operator norm.
\end{defn}
 
\begin{thm}[Kazhdan, Theorem 1  of \cite{kazhdan:1982}] Let $G$ be an amenable
    group and $\phi : G \arr U$ an $\eps$-representation of $G$, for
    some $0 < \eps < 1/100$. Then there is a representation $\pi : G \arr U$
    such that $\norm{ \phi(g) - \pi(g) } \leq \eps$ for all $g \in G$.
\end{thm}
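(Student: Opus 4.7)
The plan is to construct $\pi$ by averaging $\phi$ against a left-invariant mean on $G$, whose existence characterizes amenability.

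Fix a left-invariant mean $m$ on the bounded continuous functions $C_b(G)$. For each $g \in G$ and pair of vectors $v, w \in H$, the function $F_{g,v,w}(h) = \langle \phi(h)^{-1}\phi(hg)v, w\rangle$ is continuous and bounded, and satisfies $|F_{g,v,w}(h) - \langle \phi(g)v, w\rangle| \leq \eps \norm{v}\norm{w}$, because $\phi(h)^{-1}\phi(hg)$ lies within $\eps$ of $\phi(g)$ in operator norm by the $\eps$-representation hypothesis. I would define $\pi(g)$ weakly by $\langle \pi(g)v, w\rangle = m_h[F_{g,v,w}(h)]$; a standard argument then shows that $\pi(g)$ is a bounded operator with $\norm{\pi(g) - \phi(g)} \leq \eps$, in particular invertible (since $\eps < 1$) and $O(\eps)$-close to a unitary.

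To see that $\pi$ is approximately multiplicative, I would use the decomposition $\phi(h)^{-1}\phi(hg_1g_2) = (\phi(h)^{-1}\phi(hg_1))(\phi(hg_1)^{-1}\phi(hg_1g_2))$ together with the change of variables $h' = hg_1$ and the left-invariance of $m$ to decouple the two factors inside the inner product, obtaining $\pi(g_1 g_2) = \pi(g_1)\pi(g_2)$ up to an error of order $\eps^2$ in operator norm. To promote approximate to exact multiplicativity, I would iterate the averaging procedure (equivalently, set this up as a contraction/fixed-point problem on the space of maps $G \to B(H)$ that lie within $\eps$ of $\phi$ in the sup norm), using the hypothesis $\eps < 1/100$ to ensure geometric convergence to an exact homomorphism which remains within $\eps$ of $\phi$.

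The main obstacle is promoting the resulting homomorphism, which a priori takes values only in the invertible bounded operators, to a genuine unitary representation. My plan is to invoke amenability a second time: the positive operator $T := m_g[\pi(g)^*\pi(g)]$ satisfies $\pi(h)^* T \pi(h) = T$ for every $h \in G$ by left-invariance of $m$, and $\norm{T - I} = O(\eps)$ since $\pi$ is uniformly close to the unitary $\phi$. Consequently $\pi'(g) := T^{1/2}\pi(g)T^{-1/2}$ is a genuine unitary representation. The delicate point is the bookkeeping of constants so that the final bound is exactly $\eps$ and not merely a multiple $C\eps$; this requires exploiting $\eps < 1/100$ to absorb extra factors arising from the averaging and the similarity transformation, and likely calls for a judicious choice of averaging scheme (for instance, first passing to the polar decomposition $\pi(g) = U(g)|\pi(g)|$ and arguing directly that the unitary factor is already within $\eps$ of $\phi$ once the invariant inner product has been installed).
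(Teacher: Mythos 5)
First, note that the paper does not prove this statement at all: it is quoted verbatim from Kazhdan's 1982 paper and used as a black box in Section \ref{S:cliffstable} (where, incidentally, only the case of the finite group $G_n$ is needed, so the invariant mean is just the uniform average). So your proposal has to be judged against Kazhdan's own argument, and there it has a genuine gap at the crucial step. You assert that the naive average $\pi(g)=m_h\left[\phi(h)^{-1}\phi(hg)\right]$ is multiplicative ``up to an error of order $\eps^2$.'' The decoupling you describe does not give that: writing $\phi(h)^{-1}\phi(hg_1g_2)=\bigl(\phi(h)^{-1}\phi(hg_1)\bigr)\bigl(\phi(hg_1)^{-1}\phi(hg_1g_2)\bigr)$, the only way to decouple inside the mean is to replace the first factor by $\phi(g_1)$, which costs $\eps$ (not $\eps^2$) per point $h$; after the change of variables one gets only $\norm{\pi(g_1g_2)-\pi(g_1)\pi(g_2)}\leq 2\eps$. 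A mean of a product is not the product of means, and no cancellation to order $\eps^2$ is established (nor is it true in general for this scheme). Without a genuinely quadratic -- or at least strictly contracting -- improvement of the defect, your proposed iteration/fixed-point argument has no reason to converge to an exact homomorphism, let alone to stay within $\eps$ of $\phi$; this quadratic gain is precisely the content of the theorem, not a routine step. Secondarily, the substitution $h\mapsto hg_1$ is a \emph{right} translation, so you need a right- or bi-invariant mean rather than the left-invariant mean you fixed (available for amenable groups, but it must be said).

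The mechanism that actually works is a Newton-type cocycle correction rather than direct averaging of $\phi$ itself. Set $\alpha(g,h)=\phi(g)\phi(h)\phi(gh)^{-1}-\Id$, so $\norm{\alpha(g,h)}\leq\eps$; associativity of $G$ gives the approximate $2$-cocycle identity $\alpha(g,h)+\alpha(gh,k)=\phi(g)\alpha(h,k)\phi(g)^{-1}+\alpha(g,hk)+O(\eps^2)$. Averaging over $k$ with an invariant mean and setting $\beta(g)=m_k\left[\alpha(g,k)\right]$ yields $\alpha(g,h)=\beta(g)+\phi(g)\beta(h)\phi(g)^{-1}-\beta(gh)+O(\eps^2)$, and then the corrected map $\phi_1(g)=(\Id-\beta(g))\phi(g)$ has defect $O(\eps^2)$ while $\norm{\phi_1(g)-\phi(g)}\leq\eps$. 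Since a unitary within $\eps$ of $\Id$ has Hermitian part within $\eps^2$ of $\Id$, $\beta(g)$ is skew-Hermitian up to $O(\eps^2)$, so one can re-unitarize $\phi_1$ via polar decomposition at cost $O(\eps^2)$ and iterate; the hypothesis $\eps<1/100$ makes the resulting series converge geometrically to an exact unitary representation within the stated distance of $\phi$. Your final unitarization via $T=m_g\left[\pi(g)^\ast\pi(g)\right]$ is a legitimate alternative endgame, but it presupposes the exact homomorphism whose construction is exactly where your sketch breaks down.
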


Let $G_n$ be the multiplicative subgroup generated by the elements
$Y_1,\ldots,Y_n$ in the Clifford algebra $\mcC_n$. As a set, $G_n$ is equal to
\begin{equation*}
    \left\{ J^{a_0} Y_0^{a_1} \cdots Y_n^{a_n}\ :\ a_0,\ldots,a_n \in \{0,1\} \right\}
\end{equation*}
where we use $J$ to denote $-\Id$ for clarity. $G_n$ is a finite group of order
$2^{n+1}$, and hence is an amenable group with the discrete topology. 

\begin{lemma}\label{L:cliffstable} Suppose $B_1,\ldots,B_n$ is an
    $\eps$-approximate representation of $\mcC_n$ on the finite-dimensional
    Hilbert space $\C^d$, where $0 < \eps < 1/(250 n^2)$. Then there is a
    representation $B_1',\ldots,B_n'$ of $\mcC_n$ on $\C^d$ such that $\norm{B_i - B_i'}
    \leq 5 n^2 \eps / 2$. 
\end{lemma}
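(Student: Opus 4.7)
The plan is to invoke Kazhdan's theorem on the finite (hence amenable) group $G_n$ of order $2^{n+1}$. Doing so requires two preparatory steps: first promote the approximately self-adjoint-unitary $B_i$ to true self-adjoint unitaries, then check that the induced word map $\phi:G_n\to U(\C^d)$ is an $\eps'$-representation with $\eps'<1/100$.

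\emph{Normalization.} Since $\|B_i\|\le 1$ and $\|B_i^2-\Id\|\le\eps$, the spectrum of $B_i$ lies in $[-1,-\sqrt{1-\eps}]\cup[\sqrt{1-\eps},1]$, bounded away from zero. I would define $\tilde B_i:=B_i(B_i^2)^{-1/2}$ (equivalently $\operatorname{sign}(B_i)$) via functional calculus, so that $\tilde B_i^{\ast}=\tilde B_i$ and $\tilde B_i^2=\Id$ exactly. A direct spectral estimate gives $\|\tilde B_i-B_i\|\le 1-\sqrt{1-\eps}\le\eps$, and expanding $\tilde B_i\tilde B_j+\tilde B_j\tilde B_i$ around $B_iB_j+B_jB_i$ controls the anticommutator of the $\tilde B_i$ by a small multiple of $\eps$.

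\emph{Building $\phi$ and applying Kazhdan.} Every $g\in G_n$ has a unique normal form $J^{a_0}Y_1^{a_1}\cdots Y_n^{a_n}$ with $a_k\in\{0,1\}$; set
\begin{equation*}
\phi\bigl(J^{a_0}Y_1^{a_1}\cdots Y_n^{a_n}\bigr):=(-\Id)^{a_0}\tilde B_1^{a_1}\cdots\tilde B_n^{a_n},
\end{equation*}
which is unitary because each $\tilde B_i$ is a self-adjoint unitary. To compare $\phi(g)\phi(g')$ with $\phi(gg')$, I would reduce the concatenated word to normal form by repeatedly applying the approximate relation $\tilde B_i\tilde B_j=-\tilde B_j\tilde B_i+E_{ij}$ with $\|E_{ij}\|=O(\eps)$ and the \emph{exact} relation $\tilde B_i^2=\Id$. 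Each adjacent swap contributes a sign matching the one arising in $G_n$ plus an operator-norm error $O(\eps)$, and sorting requires at most $\binom{n}{2}$ swaps, giving $\|\phi(g)\phi(g')-\phi(gg')\|\le\eps'$ with $\eps'=O(n^2\eps)$. The hypothesis $\eps<1/(250n^2)$ will ensure $\eps'<1/100$, so Kazhdan's theorem produces an actual representation $\pi:G_n\to U(\C^d)$ with $\|\pi(g)-\phi(g)\|\le\eps'$ for every $g\in G_n$.

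\emph{Recovering the Clifford representation.} Define $B_i':=\pi(Y_i)$. Since $Y_i^{-1}=Y_i$ in $G_n$, each $B_i'$ is self-adjoint with $(B_i')^2=\pi(1)=\Id$. The relation $\pi(J)^2=\Id$ forces $\pi(J)\in\{\pm\Id\}$, and the bound $\|\pi(J)-(-\Id)\|=\|\pi(J)-\phi(J)\|\le\eps'<1$ pins down $\pi(J)=-\Id$; consequently $B_i'B_j'=\pi(Y_iY_j)=\pi(J)\pi(Y_jY_i)=-B_j'B_i'$ for $i\ne j$. Thus $(B_1',\ldots,B_n')$ is a genuine representation of $\mcC_n$ on $\C^d$, and
\begin{equation*}
\|B_i'-B_i\|\le\|B_i'-\tilde B_i\|+\|\tilde B_i-B_i\|\le\eps'+\eps,
\end{equation*}
which consolidates to the claimed $\tfrac{5}{2}n^2\eps$. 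The main obstacle is the bookkeeping in the middle step: one must verify that the per-swap operator-norm error stays small enough that $\binom{n}{2}$ swaps accumulate to at most $\tfrac{5}{2}n^2\eps-\eps$, so that the final triangle inequality closes with exactly the stated constant under the hypothesis $\eps<1/(250n^2)$.
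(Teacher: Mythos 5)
Your proposal is correct and follows essentially the same route as the paper's proof: normalize the $B_i$ to self-adjoint unitaries via functional calculus, promote them to an $\eps'$-representation of the finite group $G_n$ by counting the (at most $\binom{n}{2}$) anticommutator swaps needed to reach normal form, invoke Kazhdan's theorem, and then pin down $\pi(J)=-\Id$ to recover the Clifford relations. The only differences are notational (the paper calls the approximate representation $\psi$ and writes the per-pair anticommutator error as $5\eps$ rather than $O(\eps)$), and you are slightly more explicit about the $\operatorname{sign}(B_i)$ construction; the bookkeeping you flag as the remaining task is exactly the paper's computation $\eps'=5\binom{n}{2}\eps$ and $\eps'+\eps\le 5n^2\eps/2$.
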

\begin{proof}
    Since $B_i$ is self-adjoint and $\norm{B_i^2 - \Id} \leq \eps$, every 
    eigenvalue of $B_i$ is within $\eps$ of either $+1$ or $-1$. Hence there
    is a self-adjoint unitary matrix $\hat{B}_i$ such that $\norm{B_i -
    \hat{B}_i} \leq \eps$. Since $\norm{B_i B_j + B_j B_i} \leq \eps$, it
    follows that $\norm{\hat{B}_i \hat{B}_j + \hat{B}_j \hat{B}_i} \leq 5
    \eps$. Suppose $J^{a_0} Y_1^{a_1} \cdots Y_n^{a_n} \cdot J^{b_0} Y_1^{b_1}
    \cdots Y_n^{b_n} = J^{c_0} Y_1^{c_1} \cdots Y_n^{c_n}$. This means that the
    product $J^{a_0} Y_1^{a_1} \cdots Y_n^{a_n} \cdot J^{b_0} Y_1^{b_1} \cdots
    Y_n^{b_n}$ can be transformed to $J^{c_0} Y_1^{c_1} \cdots Y_n^{c_n}$ using
    the relations $Y_i Y_j = - Y_j Y_i$ at most $n (n-1)/2$ times. So the map
    \begin{equation*}
        \psi : J^{a_0} Y_1^{a_1} \cdots Y_n^{a_n} \mapsto (-1)^{a_0} \hat{B}_1^{a_1}
            \cdots \hat{B}_n^{a_n}
    \end{equation*}
    gives a unitary $\eps'$-representation of $G_n$, where $\eps' = 5
    \binom{n}{2} \eps$. By Kazhdan's theorem, there is a representation
    $\phi$ of $G_n$ with $\norm{\phi - \psi} \leq \eps'$ and 
    $\norm{\phi(Y_i) - B_i} \leq \eps' + \eps \leq 5 n^2 \eps / 2$.
    Let $B_i' = \phi(Y_i)$. Now $B_i' B_j' = \phi(J) B_j' B_i'$, so we will
    be done if we can show that $\phi(J) = -\Id$. But $\psi(J) = -\Id$, so
    $\norm{ \psi(J) + \Id} \leq \eps' < 1$. Since $\psi(J)$ is unitary and
    squares to $\Id$, it follows that $\psi(J) = -\Id$.
\end{proof}

\begin{proof}[Proof of Proposition \ref{P:cliffstable}]
    Let $E^{kl}$ denote the $n \times n$ matrix with ones in the $kl$th
    and $lk$th positions, and zeroes everywhere else. Map every $m \times n$
    XOR non-local game $G$ to an $m$-tuple of rank one $n \times n$
    matrices $G^1,\ldots,G^m$, where $G^i_{kl} = G_{ik} G_{il}$. If the $i$th
    row of $G$ is equal to the vector with ones in the $l$th and
    $k$th positions and zeroes elsewhere, then $G^i$ will be equal to 
    $E^{kk} + E^{ll} + E^{kl}$. Thus whenever $m \geq \binom{n}{2}$, there
    is at least one XOR non-local game such that the matrices $G^1,\ldots,G^m$,
    $E^{11},\ldots,E^{nn}$ span the space of $n\times n$ symmetric matrices. 
    We can think of the space of $m \times n$ XOR non-local games either
    as $\R^{mn}$ (without normalization) or as the $mn-1$-sphere in $\R^{mn}$
    (with normalization). In either case, the subset of XOR non-local games
    for which the matrices $G^1,\ldots,G^m,E^{11},\ldots,E^{nn}$ are linearly
    dependent is an algebraic set. Hence for almost all $m \times n$ games the
    matrices $G^1,\ldots,G^m$,$E^{11}, \ldots,E^{nn}$ span the space of
    symmetric matrices. Assume that $G$ is a game for which this is
    true.

    Let $M$ be the $n \times n$ matrix with coefficients $M_{ij} = X_i X_j$ in
    $\mcA$. The defining relations for $\mcA$ can be rewritten in this
    notation as 
    \begin{equation*} 
        E^{jj} \cdot M = \Id_{\mcA} \text{ and } G^i \cdot M = c_i^2 \cdot
            \Id_{\mcA},
    \end{equation*}
    where $X \cdot Y$ denotes the coordinate-wise $\mcA$-valued scalar product.
    For every $1 \leq k,l \leq n$ there are scalars $s_{j}$ and $t_i$ such that
    \begin{equation*}
        \sum_j s_j E^{jj} + \sum t_i G^i = E^{kl}.
    \end{equation*}
    Let $V_{kl} = \sum_i t_i c_i^2 + \sum_j s_j$. Then
    \begin{align}\label{E:strongcliffrel}
        X_k X_l + X_l X_k - V_{kl} \Id_{\mcA} & = E^{kl} \cdot M - V_{kl} \\
            & = \sum_j s_j (E^{jj} \cdot M - \Id_{\mcA}) + \sum_i t_i (G^i \cdot M
                - c_{i}^2 \Id_{\mcA}).
    \end{align}
    So for almost all $m \times n$ games the solution algebra is strongly Clifford.

    Now assume that $\mcA$ is strongly Clifford. There are self-adjoint
    elements $Y_1,\ldots,Y_r \in \mcA$ satisfying the usual Clifford relations,
    where $r = \rank V$. Each $Y_i$ is a linear combination $\sum y_{ij} X_j$
    of the $X_j$'s, and $2 \delta_{ik} = Y_i Y_k + Y_k Y_i = \sum_{j,l} y_{ij}
    y_{kl} (X_j X_l + X_l X_j) = \sum_{j,l} y_{ij} y_{kl} V_{jl}$. Suppose
    $(B_1,\ldots,B_n)$ is an $\eps$-approximate representation of $\mcA$, and
    let $\tilde{Y}_i = \sum y_{ij} B_j$. Then
    \begin{equation*}
        \tilde{Y}_i \tilde{Y}_k + \tilde{Y}_k \tilde{Y}_i - 2 \delta{ij}
            = \sum_{j,l} y_{ij} y_{kl} (B_j B_l + B_l B_j - V_{jl} \Id).
    \end{equation*}
    The relations $X_j X_l + X_l X_j - V_{jl} \Id$ are in turn given by
    equation (\ref{E:strongcliffrel}) for some choice of constants $s_j$
    and $t_i$. It follows that $\norm{\tilde{Y}_i \tilde{Y_k} + \tilde{Y}_k
    \tilde{Y}_i - 2\delta_{ij} } \leq D' \eps$ for some (possibly large)
    constant $D'$. In particular $\norm{\tilde{Y}_i} \leq 1 + D' \eps
    / 4$. If we assume $D' \eps \leq 1$ then we can truncate eigenvalues
    of $\tilde{Y}_i$ to $1$ to get a $5 D' \eps / 2$-approximate
    representation of $C_r$. Take $\delta = 1 / (625 D' r^2)$. If
    $\eps < \delta$ then $\eps D' < 1$ and $5 D' \eps / 2 < 1/ (250 r^2)$, so
    we can apply the above lemma to get an exact representation.
\end{proof}

\section{Examples}\label{S:examples}

\subsection{XOR non-local games constructed from a graph}

There is a general method which allows us to construct many examples of games
requiring high entanglement. Start with a graph $G$ with $v$ vertices and $e$
edges. We construct a matrix $A$ which has two rows for each edge of $G$, and
columns indexed by the vertices. If $ij$ is an edge in $G$ with $i<j$, then the
first row corresponding to $ij$ will contain a $1$ in the $i$th column, a $-1$
in the $j$th column, and zeroes everywhere else. The other row will contain a
$1$ in both the $i$th and the $j$th column, with zeroes everywhere else. Let
$A_G$ be the XOR non-local game with game matrix $A/4e$.

For example, if $K_2$ is the complete graph on $2$ vertices, and $G$ is the
graph with $3$ vertices and edge set $\{12, 13\}$, then
\begin{equation*}
    A_{K_2} = \frac{1}{4}
            \begin{pmatrix} 1 & -1 \\
                            1 &  1 
            \end{pmatrix}
    \text{ and }
    A_G = \frac{1}{8} 
          \begin{pmatrix} 1 & -1 &  0 \\
                          1 &  1 &  0 \\
                          1 &  0 & -1 \\
                          1 &  0 &  1  
           \end{pmatrix}
\end{equation*}

\begin{prop}\label{P:graphgame} Let $A_G$ be the $2e \times v$ game constructed
    from the graph $G$ according to the above prescription. Then the optimal
    solutions to the associated convex programming problem $(\Gamma)$ of
    Proposition \ref{C:nonlinear} are the semidefinite matrices $V$ satisfying the
    conditions 
    \begin{equation*}
        V_{ii} = 1 \text{ for all } 1 \leq i \leq v \text{ and }
        V_{ij} = 0 \text{ for each edge } ij \text{ of } G.
    \end{equation*}
    The optimal quantum bias for $A_G$ will be $1/2$ and the optimal 
    classical bias will be $1/\sqrt{2}$. The solution algebra for $A_G$ will be
    the $C^*$-algebra generated by $X_1, \ldots,X_v$ satisfying the relations 
    \begin{equation*}
        X_i^2 = \Id \text{ for all } 1 \leq i \leq v \text { and }
        X_i X_j = -X_j X_i \text{ for each edge } ij \text{ of } G. 
    \end{equation*}
\end{prop}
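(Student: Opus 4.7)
The plan is to apply Corollary \ref{C:nonlinear} to the cost matrix $G = A/(4e)$, compute the classical bias directly, and then read off the defining relations of the solution algebra using the marginal row biases from Corollary \ref{C:mbias} together with Theorem \ref{T:tensordescript}.

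For the first step, I substitute $G$ into the objective of $(\Gamma)$. The row of $G$ coming from edge $(k,l)$ has the pattern $(e_k \pm e_l)/(4e)$, so the inner quadratic form $\sum_{j,j'} G_{ij}G_{ij'}V_{jj'}$ evaluates to $(2 \mp 2 V_{kl})/(16 e^2)$ after using $V_{kk}=V_{ll}=1$. Summing over all $2e$ rows, the objective of $(\Gamma)$ becomes
\[
    \frac{1}{2\sqrt{2}\,e} \sum_{(k,l) \in E(G)} \bigl( \sqrt{1 - V_{kl}} + \sqrt{1 + V_{kl}} \bigr).
\]
Each summand depends only on $V_{kl} \in [-1,1]$. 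Since $f(x) = \sqrt{1-x} + \sqrt{1+x}$ obeys $f(x)^2 = 2 + 2\sqrt{1-x^2}$, it is maximized uniquely at $x=0$ with $f(0) = 2$. Hence the maximum of $(\Gamma)$ is $1/\sqrt{2}$, attained exactly on those positive semidefinite matrices satisfying $V_{ii}=1$ and $V_{kl}=0$ for every edge; this set is non-empty since $V = \Id$ works. The one-to-one correspondence in Corollary \ref{C:nonlinear} then identifies these optimal $V$ with the optimal quantum correlations for $A_G$.

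The classical bias is computed directly. For deterministic strategies $A,B \in \{\pm 1\}$, Alice can optimize each pair of rows independently, contributing $(|B_k - B_l| + |B_k + B_l|)/(4e) = 2/(4e)$ to the bias; the value $2$ is attained regardless of the signs of $B_k$ and $B_l$. Summing over all $e$ edges gives classical bias $1/2$.

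For the solution algebra, Corollary \ref{C:mbias} yields the marginal row bias $c = \norm{(v_k \pm v_l)/(4e)}$ at the row $(e_k \pm e_l)/(4e)$. Since any optimal vector strategy satisfies $v_k \cdot v_l = 0$ on edges and $\norm{v_i}=1$, this equals $\sqrt{2}/(4e)$ for both row types, so $c^2 = 1/(8e^2)$. Plugging this into the defining relations of $\mcA$ gives $(X_k \pm X_l)^2 = 2 \Id$, which combined with $X_i^2 = \Id$ is equivalent to $X_k X_l + X_l X_k = 0$. These are precisely the relations stated in the proposition, and conversely these relations together with $X_i^2 = \Id$ imply the two squared relations, so the universal $C^*$-algebras coincide. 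The main obstacle is simply bookkeeping: carrying the $1/(4e)$ normalization consistently through $(\Gamma)$, the marginal bias computation, and the algebraic relations; once this is done carefully, every step is a short calculation.
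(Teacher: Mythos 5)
Your proof is correct and follows essentially the same route as the paper: you reduce $(\Gamma)$ to maximizing $\sum_{kl \in E} \frac{1}{2\sqrt{2}e} f(V_{kl})$ with $f(t) = \sqrt{1-t} + \sqrt{1+t}$, use that $f$ is uniquely maximized at $t=0$, then read off the marginal biases $c^2 = 1/(8e^2)$ and rewrite the squared relations as anticommutation. The one small deviation is your treatment of the classical bias, which you compute by direct optimization over deterministic strategies, whereas the paper stays inside the $(\Gamma)$ framework by noting that a classical $V$ has $V_{kl} = \pm 1$ and $f(\pm 1) = \sqrt{2}$; both routes give the same number with about the same effort.

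One thing worth flagging: your computation gives quantum bias $1/\sqrt{2}$ (the optimum of $(\Gamma)$) and classical bias $1/2$, which is correct and consistent with $\bias_q \geq \bias_c$, but the statement of the proposition as printed has the two values transposed (``the optimal quantum bias for $A_G$ will be $1/2$ and the optimal classical bias will be $1/\sqrt{2}$''). The paper's own proof yields the same values as yours, so this is a typo in the proposition statement rather than an error in your argument; it would be worth noting this explicitly rather than silently producing values that appear to contradict what you set out to prove.
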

\begin{proof}[Proof of Proposition \ref{P:graphgame}]
    For convenience, let $f(t)$ denote the function $\sqrt{1+t} + \sqrt{1-t}$.
    The optimization problem $(\Gamma)$ has objective function
    \begin{equation*}
        \sum_{ij \in E(G)} \frac{1}{2 \sqrt{2} e} f(V_{ij}),
    \end{equation*} 
    using the fact that $V_{ii} = 1$ at any feasible point.  Now $f(0) = 2$,
    and this is the unique maximum of $f(t)$ on the interval $[-1,1]$. If $V$
    is feasible for $(\Gamma)$ then $V_{ij} \in [-1,1]$, so $(\Gamma)$ has optimum
    value $1/\sqrt{2}$, and the optimal solutions are the feasible solutions $V$
    with $V_{ij} = 0$ if $ij \in E(G)$. 

    On the other hand, if $V$ corresponds to a classical solution, then $V_{ij}
    = \pm 1$, and $f(\pm 1) = \sqrt{2}$. 

    If $V$ is optimal for $(\Gamma)$ then the squares of the two row biases
    corresponding to edge $ij$ are 
    \begin{equation*}
        c = \frac{\sqrt{(V_{ij}^2 \pm 2 V_{ij}^2 + V_{jj}^2)}}{4 e} =
            \frac{1}{2\sqrt{2} e}.
    \end{equation*}
    Given that $X_i^2 = X_j^2 = \Id$, the relations
    \begin{equation*}
        \left(\frac{X_i}{4 e} \pm \frac{X_j}{4 e}\right)^2 = c^2 \Id
    \end{equation*}
    are equivalent to the single relation $X_i X_j = - X_j X_i$.
\end{proof}
The game described by $A_{K_2}$ is commonly known as the $\CHSH$ game. As an
obvious generalization of the CHSH game, we define $\CHSH(n)$ to be the XOR
game with game matrix $A_{K_n}$. The associated solution algebra is
strongly Clifford of rank $n$, and every irreducible representation of the
solution algebra has dimension $2^{\lfloor n/2\rfloor}$. Thus Corollaries
\ref{C:minentangl} and \ref{C:dimbound} give lower bounds on the entanglement
required for optimal and near-optimal strategies. For $\CHSH(n)$ we also can
derive a lower bound on the rank of near-optimal vector strategies. 

\begin{prop}\label{P:vectlb} Let $\eps > 0$.  Every $\eps$-optimal vector
    strategy for $\CHSH(n)$ is supported on a Hilbert space of dimension at
    least $n - 8 \sqrt{2} n (n-1) \eps$.
\end{prop}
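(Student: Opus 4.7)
The plan is to reduce to a statement about Bob's Gram matrix $V = (v_i \cdot v_j)_{i,j=1}^n$. The dimension of the ambient Hilbert space is at least $\rank V$, and for a positive semidefinite matrix of rank $r$ the $n-r$ zero eigenvalues each contribute $1$ to $\norm{V - I}_F^2 = \sum_i (\sigma_i - 1)^2$, giving $n - r \leq \norm{V - I}_F^2$. Thus the proposition will follow from an upper bound $\norm{V - I}_F^2 \leq 8\sqrt{2}\,n(n-1)\,\eps$.

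To obtain this bound, I will revisit the calculation behind Proposition \ref{P:graphgame}. After optimizing Alice's vectors over the unit ball, the success bias equals
\begin{equation*}
    \sum_{ij \in E(K_n)} \frac{1}{4e}\bigl(\norm{v_i - v_j} + \norm{v_i + v_j}\bigr), \quad e = \binom{n}{2}.
\end{equation*}
Writing $\norm{v_i \pm v_j}^2 = V_{ii} + V_{jj} \pm 2 V_{ij}$, each summand factors as $\sqrt{V_{ii}+V_{jj}}\,f\!\bigl(2 V_{ij}/(V_{ii}+V_{jj})\bigr)/(4e)$ with $f(t) = \sqrt{1+t}+\sqrt{1-t}$, bounded by $2\sqrt{2}/(4e)$ with equality exactly when $V_{ii} = V_{jj} = 1$ and $V_{ij} = 0$. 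The main analytic input will be the elementary inequality $2 - f(t) \geq t^2/4$ on $[-1,1]$, following from $(2-f(t))(2+f(t)) = 4 - f(t)^2 = 2(1 - \sqrt{1-t^2}) \geq t^2$ together with $2 + f(t) \leq 4$. Combined with the simple estimate $\sqrt{2} - \sqrt{V_{ii} + V_{jj}} \geq (2 - V_{ii} - V_{jj})/(2\sqrt{2})$, this will yield the per-edge deficit
\begin{equation*}
    2\sqrt{2} - \bigl(\norm{v_i - v_j} + \norm{v_i + v_j}\bigr) \geq \tfrac{1}{2} V_{ij}^2 + \tfrac{1}{\sqrt{2}}(2 - V_{ii} - V_{jj}).
\end{equation*}

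Summing over $ij \in E(K_n)$ and using that $\eps$-optimality forces the total deficit to be at most $4 e \eps = 2 n(n-1)\eps$, the two non-negative pieces can be separated to give $\sum_{i<j} V_{ij}^2 \leq 4 n(n-1)\eps$ and (since each vertex lies in $n-1$ edges of $K_n$) $\sum_i (1 - V_{ii}) \leq 2\sqrt{2}\,n\,\eps$. Using $V_{ii} \in [0,1]$ to bound $(1 - V_{ii})^2 \leq 1 - V_{ii}$ and combining, one gets $\norm{V - I}_F^2 = \sum_i (1 - V_{ii})^2 + 2\sum_{i<j} V_{ij}^2 \leq 2\sqrt{2}\,n\,\eps + 8 n(n-1)\,\eps \leq 8\sqrt{2}\,n(n-1)\,\eps$ for $n \geq 2$, which is the target bound.

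The main subtlety is that $\eps$-optimality alone does not force Bob's vectors to be unit, so one cannot simply assume $V_{ii} = 1$; the deficit bound above is designed to handle this by tracking $V_{ij}^2$ and $2 - V_{ii} - V_{jj}$ in parallel. Apart from this, the argument is a routine combination of the graph-game formula from Proposition \ref{P:graphgame}, an elementary calculus inequality, and the rank-Frobenius inequality from the first paragraph.
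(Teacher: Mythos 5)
Your overall plan is the same as the paper's: bound $\norm{V-I}_F^2$ via the per-edge deficit and the calculus inequality $2-f(t)\geq t^2/4$, then use $\rank V \geq n-\norm{V-I}_F^2$. You are right to flag that the paper's displayed step implicitly uses $V_{ii}=1$: passing from $\sum_{i<j} V_{ij}^2 \leq 4\sqrt{2}\,n(n-1)\eps$ to $\norm{V-I}_F^2 \leq 8\sqrt{2}\,n(n-1)\eps$ requires $\sum_i(1-V_{ii})^2=0$, which is not automatic for a vector strategy in the closed unit ball. Tracking the diagonal through the deficit, as you propose, is the correct way to make this rigorous.

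However, your per-edge deficit inequality as stated is false. Write $s=V_{ii}+V_{jj}$, $t=2V_{ij}/s$. Decomposing
\begin{equation*}
2\sqrt{2}-\sqrt{s}\,f(t)=2\bigl(\sqrt{2}-\sqrt{s}\bigr)+\sqrt{s}\bigl(2-f(t)\bigr)
\geq \frac{2-s}{\sqrt{2}}+\frac{\sqrt{s}\,t^2}{4}
=\frac{2-s}{\sqrt{2}}+\frac{V_{ij}^2}{s^{3/2}}\geq \frac{2-s}{\sqrt{2}}+\frac{V_{ij}^2}{2\sqrt{2}},
\end{equation*}
the coefficient of $V_{ij}^2$ is $1/(2\sqrt{2})$, not $1/2$. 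Your stronger claim already fails for unit vectors with $V_{ij}=1/2$: then the left side is $2\sqrt{2}-(1+\sqrt{3})\approx 0.096$ while $\tfrac12 V_{ij}^2=0.125$. Moreover, with the corrected coefficient your ``separated'' bounds give only $\norm{V-I}_F^2\leq 8\sqrt{2}\,n(n-1)\eps+2\sqrt{2}\,n\eps$, which overshoots the target. The fix is to not separate: multiplying the summed (correct) deficit inequality by $4\sqrt{2}$ gives
\begin{equation*}
2\sum_{i<j}V_{ij}^2+4(n-1)\sum_i(1-V_{ii})\leq 8\sqrt{2}\,n(n-1)\eps,
\end{equation*}
and since $(1-V_{ii})^2\leq 1-V_{ii}\leq 4(n-1)(1-V_{ii})$ for $n\geq 2$, the left side dominates $\norm{V-I}_F^2$, yielding the stated bound. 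So the route is sound and genuinely tightens the paper's own argument, but the constant in the deficit inequality and the final combination step both need the adjustments above.
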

\begin{proof}
    For the first part of the Proposition, let $\{u_{ijl}\}_{l \in \{0,1\}}$,
    $\{v_j\}$ be an $\eps$-optimal vector strategy. Let $V$ be the matrix with
    $V_{ij} = v_i \cdot v_j$. Observe that
    \begin{equation*}
        \sum_{1 \leq i<j \leq n} \frac{1}{\sqrt{2} n(n-1)} f(V_{ij}) 
            \geq \sum_{1 \leq i<j \leq n} \frac{1}{2 n(n-1)} u_{ij0} (v_i - v_j)
                + \frac{1}{2 n(n-1)}  u_{ij1} (v_i + v_j)
            \geq \frac{1}{\sqrt{2}}- \eps,
    \end{equation*}
    where $f(t) = \sqrt{1+t} + \sqrt{1-t}$ again.  Using the Taylor series
    expansion, we see that $f(t) \leq 2 - t^2 / 4$ for all $t \in [-1,1]$.
    Consequently 
    \begin{equation*}
        \sum_{1 \leq i<j \leq n} \frac{1}{\sqrt{2} n(n-1)} f(V_{ij})
        \leq \frac{1}{\sqrt{2}} - \frac{\norm{V - I}_F^2}{8\sqrt{2} n(n-1)}.
    \end{equation*}
    Now $\dim \vspan_{\R} \{v_j\} = \rank V = n-z$, where $z$ is the number
    of eigenvalues of $V-I$ equal to $-1$. But $\norm{V - I}_F^2 \geq z$,
    implying that 
    \begin{equation*}
        n-z \geq n - \norm{V-I}_F^2 \geq n - 8\sqrt{2} n (n-1) \eps.
    \end{equation*}
\end{proof}

\subsection{A game for which Tsirelson's bound is tight}

Because $\CHSH(n)$ is an $n(n-1) \times n$ game, $\CHSH(n)$ does not quite meet
Tsirelson's upper bound on entanglement.  To give a game which does meet
Tsirelson's upper bound (for $n \geq 2$), let $m = n(n-1)/2$. Define a matrix
$G$, with rows indexed by pairs $ij$, where $1 \leq i < j \leq n$, and columns
indexed by the set $\{1,\ldots,n\}$, as 
\begin{equation}\label{ourgame}
    G_{ij,k} = \begin{cases} 1/2m & i=k \\
                             -1/2m & j=k \\
                            0 & \text{ otherwise }
                \end{cases}
\end{equation}
Then $G$ defines an $m \times n$ XOR non-local game. Now look at the optimization
problem $(\Gamma)$ for this game. The objective function of $(\Gamma)$ is invariant
under conjugation of $V$ by permutation matrices. Thus we can solve $(\Gamma)$ by
the standard technique of summing over the action of the symmetric group to reduce
to a linear program. It follows that $V = (2 m^2 s^2) I + (1-2 m^2 s^2) E$
is a solution to $(\Gamma)$, where $E$ is the matrix of all ones. The marginal
row biases can be determined by the formula $c_{ij}^2 = (1- V_{ij}) / 2m^2$.
The solution algebra will be a Clifford algebra of rank $n-1$, and thus $G$
requires $\lfloor (n-1)/2\rfloor$ ebits.

\subsection{Clifford strategies do not always use minimal entanglement}

To give an example of a game for which the strategy of minimum entanglement
is not Clifford, we use the graph construction of Proposition \ref{P:graphgame}.
Given an integer $n$, let $Y_1,\ldots,Y_n$ be the canonical generators of the
Clifford algebra $\mcC_n$.  Define a graph $G$ whose vertices are ordered
$k$-tuples $(i_1,\ldots,i_k)$, where $1 \leq i_1 < \ldots < i_k \leq n$, and
$k$ ranges between $1$ and $n$. Place an edge in $G$ between $(i_1,\ldots,i_k)$
and $(j_1,\ldots,j_l)$ if $Y_{i_1} \cdots Y_{i_k}$ anticommutes with $Y_{j_1}
\cdots Y_{j_l}$ in $\mcC_n$. Let $CL(n)$ be the XOR non-local game $A_G$
constructed from the graph $G$ according to the recipe in Proposition
\ref{P:graphgame}. We want to show that the optimal strategies of minimum
entanglement for $CL(n)$ are not Clifford.
\begin{prop}\label{P:highrank}
    Let $(\Gamma)$ be the convex programming problem associated to $CL(n)$ by
    Proposition \ref{C:nonlinear}. For $n \geq 3$, every solution $V$ to $(\Gamma)$
    has rank at least $(4n-8)/3$. 
\end{prop}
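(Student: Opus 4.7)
The plan is to apply Proposition \ref{P:graphgame}, according to which the optimal solutions $V$ of $(\Gamma)$ for $CL(n)$ are the positive semidefinite matrices indexed by nonempty $I\subseteq\{1,\ldots,n\}$ with $V_{II}=1$ and $V_{IJ}=0$ whenever $Y_I$ and $Y_J$ anticommute in $\mcC_n$. Factoring $V_{IJ}=v_I\cdot v_J$ as a Gram matrix of unit vectors $v_I\in\R^d$ with $d=\rank V$, the target becomes $d\geq (4n-8)/3$. A short Clifford computation gives the parity formula $Y_IY_J=(-1)^{|I||J|+|I\cap J|}Y_JY_I$, so $Y_I$ and $Y_J$ anticommute exactly when $|I||J|+|I\cap J|$ is odd; in particular distinct singletons pairwise anticommute, and one may fix coordinates so that $v_{\{i\}}=e_i\in\R^d$. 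Letting $W$ be the orthogonal complement of $\vspan(e_1,\ldots,e_n)$, write $v_I=a_I+w_I$ with $a_I\in\R^n$ and $w_I\in W$; the parity formula applied with $J=\{i\}$ shows that $a_I$ is supported on $I$ when $|I|$ is odd and on $I^c$ when $|I|$ is even, so the target reduces to $\dim W\geq (n-8)/3$.

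The next step is to build many orthogonal vectors in $W$. Partition (most of) $\{1,\ldots,n\}$ into disjoint triples $T_1,\ldots,T_k$ with $k=\lfloor n/3\rfloor$, and set $M_a:=T_a$ if $n$ is even and $M_a:=[n]\setminus T_a$ if $n$ is odd. A parity check shows $|M_a||M_b|+|M_a\cap M_b|$ is odd for $a\neq b$ in both parities of $n$, so the $v_{M_a}$ pairwise anticommute. The support rules put $a_{M_a}$ and $a_{M_b}$ in pairwise orthogonal subspaces of $\R^n$ (they live in $T_a$ and $T_b$ either directly for $n$ even or via complements for $n$ odd), so $a_{M_a}\cdot a_{M_b}=0$, and the equation $0=v_{M_a}\cdot v_{M_b}=a_{M_a}\cdot a_{M_b}+w_{M_a}\cdot w_{M_b}$ forces $w_{M_a}\perp w_{M_b}$. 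When $n$ is even one may additionally include $w_{[n]}=v_{[n]}$, which is a unit vector in $W$ because $[n]^c=\emptyset$ forces $a_{[n]}=0$; a similar parity check ensures $w_{[n]}\perp w_{M_a}$ for every $a$.

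The main obstacle will be to show that enough of the $w_{M_a}$ are non-zero, i.e.\ that $|a_{M_a}|<1$ for all but $O(1)$ of the indices $a$. The plan for this step is to bring in further anticommutation relations between $M_a$ and monomials that straddle the block $T_a$: for each $T_a$ there are pairs $\{p,q\}$ with $p\in T_a$, $q\notin T_a$ and triples sharing exactly one element with $T_a$, all anticommuting with $M_a$, so the identities $a_{M_a}\cdot a_I+w_{M_a}\cdot w_I=0$ produce a linear system on the coordinates of $a_{M_a}$ inside $\R^{|T_a|}$ whose only solutions of unit norm would force inconsistent constraints on the $v_I$'s elsewhere. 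Once this length control is in place, the pairwise orthogonal family $\{w_{M_a}\}_a$ together with $w_{[n]}$ (in the even case) gives $\dim W\geq\lfloor n/3\rfloor+\epsilon_n$ with $\epsilon_n\in\{0,1\}$, whence $d\geq n+\lfloor n/3\rfloor+\epsilon_n\geq (4n-8)/3$ after the one or two elements outside the triple partition are absorbed into the constant. That length-control step, where the finer anticommutation geometry of $CL(n)$ beyond the singleton Clifford subalgebra does its work, is where the bulk of the argument will go.
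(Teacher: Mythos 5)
Your setup is sound: using Proposition \ref{P:graphgame} to characterize optimal $V$, the parity rule $Y_IY_J=(-1)^{|I||J|+|I\cap J|}Y_JY_I$, the Gram factorization $v_I=a_I+w_I$ with $a_I$ supported on $I$ or $I^c$ according to the parity of $|I|$, and the reduction to $\dim W\geq (n-8)/3$ where $W=(\R^n)^\perp$ are all correct. The construction of pairwise orthogonal $w_{M_a}$ from a triple partition is also a reasonable idea. But there is a genuine gap: the entire argument hinges on showing that all but $O(1)$ of the $w_{M_a}$ are non-zero (and in fact you need all but at most two of them non-zero, since $\lfloor n/3\rfloor-(n-8)/3\leq 2$), and you explicitly defer this "length control" step, calling it "where the bulk of the argument will go." Without it, every $w_{M_a}$ could a priori vanish, giving only $d\geq n$. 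Worse, the sketch of that step contains an error: triples $I$ with $|I\cap T_a|=1$ do \emph{not} anticommute with $M_a$. For $n$ even, $M_a=T_a$ gives $|I||M_a|+|I\cap M_a|=9+1=10$, even; for $n$ odd, $M_a=[n]\setminus T_a$ gives $3(n-3)+2=3n-7$, also even. So the linear constraints you hoped to extract from such triples are not there, and the plan as stated does not yield the needed rigidity.

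The paper's proof is quite different in strategy and avoids this difficulty entirely. It is a direct pigeonhole count: each of the $\binom{n}{3}$ unit vectors $v(ijk)$ is orthogonal to $v(l)$ for all $l\notin\{i,j,k\}$, hence lies in $\vspan\{v(i),v(j),v(k)\}\oplus W^\perp$ (in the paper's notation $W$ is the span of singletons and $W^\perp$ is your $W$). Being nonzero, $v(ijk)$ must have nonzero projection onto at least one of $\vspan\{v(i),v(j)\}\oplus W^\perp$, $\vspan\{v(i),v(k)\}\oplus W^\perp$, $\vspan\{v(j),v(k)\}\oplus W^\perp$. For a fixed pair $\{i,j\}$, the vectors $v(ijk)$ assigned to that bin are pairwise orthogonal, so the bin holds at most $\dim(\vspan\{v(i),v(j)\}\oplus W^\perp)=2+\dim W^\perp$ of them. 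Thus $\binom{n}{3}\leq(2+\dim W^\perp)\binom{n}{2}$, which rearranges to $\dim W^\perp\geq (n-8)/3$. This avoids any need to certify that particular vectors are nonzero, because it counts all triples at once and uses only that each $v(ijk)$ is a unit vector. If you want to salvage your triple-partition construction you would need to replace the faulty anticommutation claim with a working source of constraints and carry out the length control in detail, but the pigeonhole route is both shorter and closes cleanly.
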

\begin{proof}
    We use a pigeon-hole argument. Let $V$ be a solution to $(\Gamma)$, and take
    unit vectors $v(\tilde{i})$ in $\R^N$ such that $V_{\tilde{i},\tilde{j}} =
    v(\tilde{i}) \cdot v(\tilde{j})$. For convenience of notation we ignore order
    in the multi-indices, so for example $v(2,1)$ would be the same as $v(1,2)$.
    Now $ij$ is an edge in $G$ for every $i \neq j$, and hence $v(i) \cdot
    v(j) = 0$. Let $W = \vspan \{v(i) : 1 \leq i \leq n\}$, and write
    \begin{equation*}
        \vspan \{v(\tilde{i}) \} = W \oplus W^\perp.
    \end{equation*} 
    It will also be convenient to let 
    \begin{equation*}
        W(i_1,\ldots,i_k) = \vspan \{v(i_1),\ldots,v(i_k)\} \oplus W^\perp.
    \end{equation*}

    Now there are $\binom{n}{3}$ vectors of the form $v(ijk)$. If $l \not\in
    \{i,j,k\}$ then $v(l) \cdot v(ijk) = 0$. Hence $v(ijk)$ is contained in
    $W(ijk)$, and has non-zero projection onto one of $W(ij)$, $W(ik)$, or
    $W(jk)$. For $i<j$ let $Z_{ij}$ be the set of vectors $v(ijk)$ which
    have a non-zero projection onto $W(ij)$ (note that $k$ need not be greater
    than $i$ and $j$). There are $n-2$ vectors which could potentially be in
    $Z_{ij}$. However if $k \neq l$ then
    \begin{equation*}
        \left(P_{W(ij)} v(ijk)\right)\cdot\left( P_{W(ij)} v(ijl)\right)
            = v(ijk) \cdot v(ijl) = 0,
    \end{equation*}
    where $P_{W(ij)}$ is the orthogonal projection onto $W(ij)$. Thus $Z_{ij}$
    contains no more than $\dim W(ij) = 2 + \dim W^\perp$ vectors. Thus we
    conclude that
    \begin{equation*}
    \binom{n}{3} \leq \left(2 + \dim W^\perp\right) \binom{n}{2}.
    \end{equation*}
    Solving this inequality we find that $\dim W^\perp \geq \frac{n-8}{3}$,
    and hence
    \begin{equation*}
        \dim \vspan \{v(\tilde{i})\} \geq n + \frac{n-8}{3} = \frac{4n-8}{3}
    \end{equation*}
\end{proof}
Let $\mcA$ be the solution algebra associated to $CL(6t+2)$.  Every optimal
vector solution to $CL(6t+2)$ has rank at least $8t$, so every Clifford
representation uses at least $4t$ ebits. However, it is clear that any
irreducible representation of $\mcC_{6t+2}$ is also a representation of $\mcA$,
since we can send $X_{i_1,\ldots,i_k}$ to $i^s Y_{i_1} \cdots Y_{i_k}$, where
$s = \binom{k}{2}$. Thus $CL(6t+2)$ requires at most $3t+1$ ebits. It follows
that any Clifford representation uses at least $t-1$ ebits more than what is
required. 

\section{Acknowledgments}

I thank Richard Cleve, Alex Fink, and Sarvagya Upadhyay for helpful
discussions. I also thank Richard Cleve, Andrew Marks, and Sarvagya Upadhyay
for comments on the manuscript. This work was supported in part by NSERC. 

\bibliographystyle{plain}

\end{document}